\documentclass[journal]{IEEEtran}
\usepackage{times}
\usepackage{graphicx}
\usepackage{amssymb, amsmath}
\usepackage{ifthen}
\usepackage{fancyhdr}
\usepackage{bbm}
\usepackage{tikz}
\usepackage{cite}
\usepackage{enumerate}
\usepackage{verbatim}

\newboolean{draft}
\setboolean{draft}{true}

\ifthenelse{\boolean{draft}}
{
}
{

 \setlength{\textwidth}{15cm}
 \setlength{\oddsidemargin}{0.1cm}
 \setlength{\evensidemargin}{0.1cm}
 \setlength{\textheight}{25cm}
}

\vfuzz2pt 
\hfuzz2pt 
\newtheorem{thm}{Theorem}[section]
\newtheorem{cor}[thm]{Corollary}
\newtheorem{lem}[thm]{Lemma}

\newtheorem{defn}[thm]{Definition}
\newtheorem{rem}[thm]{Remark}
\newcommand{\be}{\begin{equation}}
\newcommand{\ee}{\end{equation}}
\newcommand{\ben}{\begin{equation*}}
\newcommand{\een}{\end{equation*}}
\newcommand{\ba}{\begin{eqnarray}}
\newcommand{\ea}{\end{eqnarray}}

\newcommand{\Real}{\mathbb R}

\newcommand{\T}{\mathrm{T}}
\newcommand{\E}{\mathbb E}

\newcommand{\iden}{\mathbf{I}}

\newcommand{\supp}{\mathrm{supp}}

\newcommand{\rank}{\mathrm{rank}}
\newcommand{\ind}{\mathbbm 1}


\def\bx{\mathbf{x}}
\def\bA{\mathbf{A}}

\graphicspath{{.}
{./Figures/}
}

\begin{document}
\title{\mbox{Number of Measurements in Sparse Signal Recovery}}

\author{
	Paul Tune, Sibi Raj Bhaskaran, Stephen Hanly \\
	University of Melbourne, VIC-3010, Australia. \\
	{\tt \{l.tune,sibib,hanly\}@unimelb.edu.au}
	\vspace*{-0.2cm}
}

\maketitle

\begin{abstract}
We analyze the asymptotic performance of sparse signal recovery from
noisy measurements. In particular, we generalize some of the 
existing results for the Gaussian case to subgaussian and other
ensembles. An achievable result is presented for the linear sparsity
regime. A converse on 
the number of required measurements in the sub-linear regime is also
presented, which cover
 many of the widely used measurement ensembles. Our converse
idea makes use of a correspondence between compressed sensing ideas
and compound channels in information theory.  
\end{abstract}

%
\section{Introduction}
\label{sec:intro}
Sparse support recovery has been given much attention of late, due to the fact 
that many signals dealt with are sparse in some basis. We will consider the model,
\be
\mathbf{y} = \mathbf{A}\mathbf{x} + \mathbf{z}
\label{eq:model_one}
\ee
where $\mathbf{y} \in \Real^m$, $\mathbf{A} \in \Real^{m\times n}$, $\mathbf{z} 
\in \Real^m$, distributed with $\mathcal{N}(0,\sigma^2 \iden)$. The support of
$\mathbf{x}$ is the index set $\mathcal{I}$, $\supp(\mathbf{x}) =  |\mathcal{I}| = k$.
The signal power, $\|\mathbf{x}\|^2_{\ell_2} = P$. Each column of $\mathbf{A}$ is
normalized to have unit $\ell_2$ norm . 

Our main motivation in this paper is to study a wider class of measurement matrices.
Previous studies have specifically focussed on the Gaussian measurement matrix
\cite{Wainwright07, Akcakaya07}. Two distinct sparsity regimes are often considered in
literature:
\begin{itemize}
\item \textbf{Sublinear}: $\frac{k}{n} \to 0$ as both $k, n \to \infty$, and
\item \textbf{Linear}: $k = \rho n$ for $\rho \in (0,1)$.
\end{itemize}
The following three performance estimates were studied in \cite{Akcakaya07},\cite{Wainwright07}.
\begin{itemize}
\item \textbf{Error metric 1}:
\ben
d_1(\mathbf{x}, \mathbf{\hat x}) 
= \ind \left( \{{\hat x}_i \ne 0\ \forall i \in
\mathcal{I} \} \cap \{\hat x_j = 0\ \forall j \notin \mathcal{I} \} \right)
\een
\item \textbf{Error metric 2}:
\ben
d_2(\mathbf{x}, \mathbf{\hat x}) = \ind \left(\frac{|\{\hat x_i \ne 0\}|}{|
\mathcal{I}|} > 1-\alpha \right)
\een 
\item \textbf{Error metric 3}:
\ben
d_3(\mathbf{x}, \mathbf{\hat x}) = \ind \left(\sum_{k\in \{i|{\hat x}_i \ne 0\} \cap
\mathcal{I}} |x_k|^2 > (1-\delta)P\right)
\een
\end{itemize}
where $\ind(\cdot)$ is the binary valued indicator function which is unity when the
argument is true,  and $\alpha, \delta$ are in $(0,1)$. In Section 
\ref{sec:achievability}, we focus on subgaussian measurement matrices. 
\begin{defn}
A random variable $x$ is {\em subgaussian} if there is a constant $B > 0$ such that
\ben
\Pr(|x| \ge t) \le 2 \exp(-t^2/B^2)
\een
for all $t > 0$. The smallest $B$ is called the {\em subgaussian moment} of $x$.
\label{def:subgaussian}
\end{defn}
An example of a subgaussian measurement matrix is the matrix with i.i.d.~entries of $\pm
1/\sqrt{m}$ distributed according to Bernoulli($\frac{1}{2}$).

 We show that centered 
subgaussian measurement matrices achieve the same asymptotic results as Gaussian 
measurement matrices in the linear sparsity regime, i.e. $m = O(k)$ measurements
suffice for signal recovery. For the linear case, we are taking the \textit{pessimistic}
point of view that good measurement (sensing) schemes should have an exponentially
decaying error probability in the number of measurements, which will also have a 
bearing on the practical constructions. On the other hand, if we take an 
\textit{optimistic} (see \cite{CsiszarKorner81}) viewpoint, that a sub-exponential
decay in error is acceptable, our analysis remains valid for the sub-linear regime also.   

In Section \ref{sec:num:converse}, we present some converse results, which lower bounds
the required number of measurements for asymptotically exact support recovery. Our
converse results give the required scaling of $m$ with respect to $n$ and $k$ in both 
the regimes. Specifically, we invoke a correspondence between compressed sensing schemes
and compound channels in information theory.  Here we
consider general measurement matrices and the underlying assumptions are mild.

\section{Achievability}
\label{sec:achievability}

Our setup for achievability is similar to \cite{Akcakaya07}. In particular, we
extend Theorems~2.1, 2.5 and 2.9 from \cite{Akcakaya07} which provide results for the
number of measurements needed using Gaussian measurement matrices for the error metrics
considered in this paper. For Gaussian measurement matrices, the number of measurements
required for all three error metrics in the linear sparsity regime is $m = O(k)$, where 
the hidden constant value differs for each error metric. For completeness, we state
Theorem 2.1 from \cite{Akcakaya07} here.
\begin{thm}[Achievability for error metric 1]
Let a sequence of sparse vectors, $\{\mathbf{x}_{(n)} \in \Real^n\}_n$ 
($\mathbf{x}_{(n)}$ denotes a dependence on $n$) with $\supp(\mathbf{x}_{(n)}) = k =
\lfloor \rho n \rfloor$. Then asymptotic reliable recovery is possible for
$\{\mathbf{x}_{(n)}\}$ with respect to error metric 1 if $\frac{k \mu^4 (
\mathbf{x}_{(n)})}{\log k} \to \infty$ as $k\to \infty$ and $$ m > c_1 k$$
where $\mu(\mathbf{x}) = \min_{i \in \mathcal{I}} |x_i|$ and $c_1$ is a constant 
depending on $\rho$, $\mu(\mathbf{x})$ and $\sigma$.
\end{thm}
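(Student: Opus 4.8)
The plan is to prove this by analyzing a typicality-style decoder over the $\binom{n}{k}$ candidate supports, in the spirit of \cite{Akcakaya07}. For an index set $S$ with $\abs{S}=k$ write $\mathbf{A}_S$ for the column submatrix, $\mathbf{P}_S$ for the orthogonal projection onto its range, and $E_S:=\norm{(\iden-\mathbf{P}_S)\mathbf{y}}^2$ for the energy of the residual when one ``explains'' $\mathbf{y}$ with the columns in $S$. The decoder accepts $S$ if $E_S$ lies within a one-sided tolerance $\delta$ of the value it would have if $\mathbf{y}$ were pure noise, and outputs the accepted $S$ if it is unique, declaring an error otherwise. A one-sided window is enough because --- as the geometry below shows --- a wrong support can only make $E_S$ \emph{larger} in expectation. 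The whole proof turns on calibrating $\delta$, and the right choice is of order $\mu^2(\mathbf{x}_{(n)})$, namely a fixed fraction of the smallest inflation a wrong support can cause.

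Two facts are needed. First, since $\mathbf{z}$ is independent of $\mathbf{A}$ and $(\iden-\mathbf{P}_{\mathcal{I}})\mathbf{A}_{\mathcal{I}}\mathbf{x}_{\mathcal{I}}=0$ with $\mathcal{I}=\supp(\mathbf{x}_{(n)})$, the true support gives $E_{\mathcal{I}}=\norm{(\iden-\mathbf{P}_{\mathcal{I}})\mathbf{z}}^2$, a scaled central $\chi^2_{m-k}$, so $E_{\mathcal{I}}$ concentrates sharply around its mean by Laurent--Massart. Second, for a wrong $S$ with $\ell:=\abs{\mathcal{I}\setminus S}\ge1$ the component of $\mathbf{A}_{\mathcal{I}}\mathbf{x}_{\mathcal{I}}$ in $\mathrm{range}(\mathbf{A}_S)$ is annihilated and there remains the ``leakage'' $\mathbf{w}_S:=(\iden-\mathbf{P}_S)\mathbf{A}_{\mathcal{I}\setminus S}\mathbf{x}_{\mathcal{I}\setminus S}$, so that conditionally on $\mathbf{A}$, $E_S$ is distributed as a noncentral $\chi^2_{m-k}$ with noncentrality proportional to $\norm{\mathbf{w}_S}^2$ (same scaling), concentrating around $\E[E_{\mathcal{I}}]+\norm{\mathbf{w}_S}^2$. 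The geometric core is the uniform lower bound $\norm{\mathbf{w}_S}^2\gtrsim\ell\,\mu^2(\mathbf{x}_{(n)})$, which I would establish by noting that the columns indexed by $\mathcal{I}\setminus S$ are independent of $\mathbf{A}_S$, so $\iden-\mathbf{P}_S$ retains a fraction $1-k/m+o(1)$ of the energy of $\mathbf{A}_{\mathcal{I}\setminus S}\mathbf{x}_{\mathcal{I}\setminus S}$ by concentration for random subspaces, while $\norm{\mathbf{A}_{\mathcal{I}\setminus S}\mathbf{x}_{\mathcal{I}\setminus S}}^2\ge\sum_{i\in\mathcal{I}\setminus S}\abs{x_i}^2\ge\ell\mu^2$ up to cross terms $\langle\mathbf{a}_i,\mathbf{a}_j\rangle$ controlled by the standard concentration of Gaussian column inner products --- all of this on one high-probability event on which $\mathbf{A}_{\mathcal{I}}$ is well conditioned (valid for $m>c_1k$ with $c_1$ large). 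Thus every wrong support inflates $E_S$ by at least $\asymp\ell\mu^2$.

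With $\delta$ a fixed fraction of $\mu^2$ I would then union bound. The true support is missed only if a central $\chi^2$ overshoots by $\delta$; by Laurent--Massart this costs $\exp(-c\,m\mu^4/\sigma^4)$, which $\to0$ since $m>c_1k$ and $k\mu^4/\log k\to\infty$. A wrong $S$ of deficit $\ell$ is accepted only if its noncentral $\chi^2$ undershoots its mean by at least $\norm{\mathbf{w}_S}^2-\delta\gtrsim\ell\mu^2$, and the noncentral $\chi^2$ lower tail bounds this by $\exp(-c\,m\ell^2\mu^4/\sigma^4)$ in the only relevant regime $\mu\lesssim\sigma$ (for $\mu\gtrsim\sigma$ both the hypothesis and the estimates are trivial). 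Since $k=\lfloor\rho n\rfloor$ there are $\binom{k}{\ell}\binom{n-k}{\ell}=\exp(O(\ell\log k))$ supports of deficit $\ell$, so that slice of the bound is $\exp(\ell[\,O(\log k)-c'\,k\mu^4/\sigma^4\,])$, which for $c_1$ large enough (depending on $\rho,\mu,\sigma$ through the implied constants) is summable in $\ell\ge1$ and vanishes precisely when $k\mu^4/\log k\to\infty$. The decisive --- and most delicate --- term is $\ell=1$: there are $\Theta(k(n-k))=\Theta(k^2)$ supports differing from $\mathcal{I}$ in a single index, each with the minimal leakage $\asymp\mu^2$, so $\delta$ must at once exceed the concentration error of $E_{\mathcal{I}}$ (which, given the $m$-fold averaging built into $\mathbf{A}$ and $m>c_1k$, amounts to $m\mu^4\gg\sigma^4$) while the per-support failure probability must beat $1/k^2$ (sharpening this to $m\mu^4\gg\sigma^4\log k$, i.e.\ $k\mu^4/\log k\to\infty$). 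This single-coordinate-swap bottleneck is exactly where the quartic power of $\mu(\mathbf{x}_{(n)})$ and the logarithm originate; calibrating $\delta$ and making the one-sided $\chi^2$ tail sharp there is the main obstacle, and everything else is routine concentration bookkeeping.
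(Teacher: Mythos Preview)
Your overall architecture---typicality decoder on $E_S=\norm{(\iden-\mathbf{P}_S)\mathbf{y}}^2$, union bound stratified by the deficit $\ell=\abs{\mathcal{I}\setminus S}$, and the $\ell=1$ single--swap term as the bottleneck that produces the condition $k\mu^4/\log k\to\infty$---matches the paper (and \cite{Akcakaya07}) exactly. The gap is in how you bound $\Pr(\Omega_S)$ for a wrong support.

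You condition on $\mathbf{A}$ and model $E_S/\sigma^2$ as a \emph{noncentral} $\chi^2_{m-k}(\lambda)$ with $\lambda=\norm{\mathbf{w}_S}^2/\sigma^2$, then invoke a lower--tail bound. But with your own leakage estimate $\norm{\mathbf{w}_S}^2\gtrsim\ell\mu^2$ (correct for unit--$\ell_2$ columns, since $\norm{\mathbf{A}_{\mathcal{I}\setminus S}\mathbf{x}_{\mathcal{I}\setminus S}}^2\approx\sum_{i\in\mathcal{I}\setminus S}x_i^2$), the undershoot you need is of the same order $\ell\mu^2/\sigma^2$, while the degrees of freedom are $m-k$. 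The Laurent--Massart lower tail then gives an exponent of order
\[
\frac{(\ell\mu^2/\sigma^2)^2}{m-k+2\lambda}\;\asymp\;\frac{\ell^2\mu^4}{m\,\sigma^4},
\]
not $c\,m\,\ell^2\mu^4/\sigma^4$ as you claim. That is a factor $m^2$ short, and for $\ell=1$ it would require $\mu^4/(k\sigma^4)\gg\log k$ rather than $k\mu^4/\log k\to\infty$; the union bound collapses.

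The paper (following \cite{Akcakaya07}) avoids this by \emph{not} conditioning on $\mathbf{A}$. Because the columns $\mathbf{a}_i$, $i\in\mathcal{I}\setminus S$, are independent of $\mathbf{A}_S$ and of $\mathbf{z}$, the vector $\sum_{i\in\mathcal{I}\setminus S}x_i\mathbf{a}_i+\mathbf{z}$ is itself Gaussian with per--coordinate variance $\sigma_y^2=\sum_{i\in\mathcal{I}\setminus S}x_i^2+\sigma^2$, so that $E_S/\sigma_y^2$ is a \emph{central} $\chi^2_{m-k}$ with a different \emph{scale} than $E_{\mathcal{I}}/\sigma^2$. Distinguishing two scales of a central $\chi^2$ gives an exponent that grows linearly in the degrees of freedom,
\[
\Pr(\Omega_S)\;\le\;2\exp\!\left(-\tfrac{m-k}{4}\Bigl(\tfrac{\sigma_y^2-\sigma^2-\delta'}{\sigma_y^2}\Bigr)^{\!2}\right)
\;\le\;2\exp\!\left(-c(m-k)\,\tfrac{\ell^2\mu^4}{(\ell\mu^2+\sigma^2)^2}\right),
\]
which for $\ell=1$ and $\mu\ll\sigma$ is $\exp(-c\,m\mu^4/\sigma^4)$, exactly what you need. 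In short, the event $\{E_S\text{ small}\}$ is rare not only because the noise has to be small but \emph{also} because the random leakage $\norm{\mathbf{w}_S}$ has to be atypically small; freezing $\mathbf{A}$ and bounding $\norm{\mathbf{w}_S}$ from below throws away this second source of decay. Reworking your step by keeping the randomness of $\mathbf{A}_{\mathcal{I}\setminus S}$ inside the tail estimate (equivalently, normalizing by $\sigma_y$ rather than $\sigma$) closes the gap and recovers the paper's bound.
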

Our result here shows that these results apply to subgaussian measurement matrices. 

On the other hand, in the sublinear sparsity regime, measurements required are now in
the order of $m = O(k\ log(n-k))$ for all three error metrics for Gaussian measurement
matrices. As mentioned earlier, if we take an optimistic viewpoint, then subgaussian
measurement matrices also achieves the same performance as the Gaussian counterpart. The
Lasso scheme was shown to perform optimally in the sublinear regime \cite{Wainwright06}
but the results show that there is a significant gap of the performance of Lasso in
the linear regime.

Let $\mathcal{D} (\mathbf{y})$ be a decoder, which outputs a set
of indices, depending on the problem objective. Our achievability results show the
existence of asymptotically good measurement matrices. Similar to the random coding
arguments in information theory, the average error probability attained by using random
measurement matrices chosen from an ensemble can be made arbitrarily small
asymptotically. However, good matrices are not explicitly identified.

The probability of decoding error for $\mathcal{D}$, averaged over all
measurement matrices $\mathbf{A}$, is defined as
\ben
p_{err}(\mathcal{D}|\mathbf{x}) = \E_{\mathbf{A}}(p_{err}(\mathbf{A}|\mathbf{x}))
= \E_{\mathbf{A}}(\Pr(\mathcal{D}(\mathbf{y}) \ne \mathcal{I})).
\een
We focus on decoders using joint typicality. We define the projection matrix of
$\mathbf{B}$ as $\Pi_\mathbf{B} = \mathbf{B} (\mathbf{B}^\T 
\mathbf{B})^{-1}\mathbf{B}^\T$. The orthogonal projection is defined as 
$\Pi_\mathbf{B}^{\bot} = \iden - \mathbf{B} (\mathbf{B}^\T \mathbf{B})^{-1} 
\mathbf{B}^\T$. 

\begin{defn}[Joint Typicality]\cite{Akcakaya07}
The noisy observation vector $\mathbf{y}$ and a set of indices $\mathcal{J} \subset 
\{1,2,\ldots,n\}$, with $|\mathcal{J}| = k$, are $\delta$-jointly typical if
$\rank(\mathbf{A}_\mathcal{J}) = k$ and 
\ben
\left| \frac{1}{m}\|\Pi^{\bot}_{\mathbf{A}_\mathcal{J}}\mathbf{y}\|_{\ell_2}^2 - 
\frac{m-k}{m}\sigma^2 \right| < \delta
\een
\end{defn}

Denote the events,
\ben
\Omega_{\mathcal{J}} = \{\text{$\mathbf{y}$ and $\mathcal{J}$ are $\delta$-typical}\}
\een
and
\ben
\Omega_0 = \{\rank(\mathbf{A}_{\mathcal{I}}) < k\}.
\een

The decoder has three sources of error:
\begin{itemize}
\item the decoder searches incorrect subspaces, event $\Omega_0$,
\item the true support set $\mathcal{I}$ is not $\delta$-jointly typical, event 
$\Omega_{\mathcal{I}}^c$, and 
\item the decoder recovers another support set $\mathcal{J}$ such that $\mathcal{J} \ne
\mathcal{I}$, event $\Omega_{\mathcal{J}}$. 
\end{itemize}
Hence, the upper bound to the decoder error is given by union bound of the three sources 
of error,
\begin{align}
\label{eq:err_upper_bnd}
p_{err}(\mathcal{D}|\mathbf{x}) 
&\le \Pr(\Omega_0) + \Pr(\Omega_{\mathcal{I}}^c) +
\sum_{\mathcal{J},\mathcal{J}\ne \mathcal{I}, |\mathcal{J}| = k} 
\Pr(\Omega_{\mathcal{J}}).
\end{align}
It suffices to find bounds on each error probability that vanishes asymptotically as 
$n \to \infty$. We show this below.
\subsection{Proof of Achievability}
We first find bounds on the probability that $\Omega_0$ occurs by using the following
result \cite[Theorem 1.1]{Rudelson08tall}.
\begin{lem}
Let $X$ be a subgaussian random variable with zero mean, variance one and subgaussian
moment $B$. Let $\mathbf{X} \in \Real^{m\times k},\ m \ge k$ be the random
matrix whose entries are i.i.d.~copies of $X$. Then there are positive constants $c_1,
c_2$ (depending polynomially on $B$) such that for any $t > 0$
\ben
\Pr(s_k(\mathbf{X}) \le t(\sqrt{m} - \sqrt{k-1})) \le (c_1 t)^{m-k+1} + e^{-c_2 m} .
\een
where $s_k(\mathbf{X})$ denotes the smallest singular value of $\mathbf{X}$.
\label{lem:singular_subgauss}
\end{lem}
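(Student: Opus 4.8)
The statement is the Rudelson--Vershynin tall-matrix invertibility estimate, so the plan is to follow their geometric approach while keeping track of the subgaussian moment $B$. Write $s_k(\mathbf X)=\inf_{x\in S^{k-1}}\|\mathbf X x\|_{\ell_2}$ and let $X_1,\dots,X_k\in\Real^m$ denote the columns of $\mathbf X$. The sphere $S^{k-1}$ is split into \emph{compressible} vectors — those within a small distance $\rho$ of a unit vector supported on at most $\delta k$ coordinates — and \emph{incompressible} vectors, where $\delta,\rho\in(0,1)$ are small constants to be chosen as functions of $B$. The two pieces produce the two terms in the bound: the compressible part (together with the fluctuation of the operator norm $\|\mathbf X\|$) gives $e^{-c_2 m}$, and the incompressible part gives $(c_1 t)^{m-k+1}$.

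For the compressible directions I would use a net argument. For a fixed unit vector $x$, $\|\mathbf X x\|_{\ell_2}^2=\sum_{j=1}^m\langle R_j,x\rangle^2$ (with $R_j$ the rows) is a sum of $m$ i.i.d.\ non-degenerate subgaussian squares, each of mean $1$ since the entries have variance $1$; a Laplace-transform (tensorization) argument then gives $\Pr(\|\mathbf X x\|_{\ell_2}\le c_0(B)\sqrt m)\le e^{-c_1(B)m}$, and similarly $\Pr(\|\mathbf X\|\ge C(B)\sqrt m)\le e^{-c(B)m}$ from the subgaussian operator-norm bound. The set of compressible vectors admits a net of cardinality $e^{c(\delta\log(1/\delta)+\delta\log(1/\rho))k}$; choosing $\delta,\rho$ small enough in terms of $B$ makes this net exponent beaten by $c_1(B)m$ for \emph{every} $m\ge k$, so $\inf_{x\text{ compr}}\|\mathbf X x\|_{\ell_2}\ge c_2\sqrt m$ with probability $1-e^{-c_2 m}$.

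For the incompressible directions I would pass to column distances. Let $H_i=\mathrm{span}\{X_j:j\neq i\}$, of dimension at most $k-1$. Incompressibility of a unit vector $x$ forces a constant fraction of its coordinates to satisfy $|x_i|\gtrsim k^{-1/2}$, and for each such $i$, $\|\mathbf X x\|_{\ell_2}\ge|x_i|\,\mathrm{dist}(X_i,H_i)$. Running the Rudelson--Vershynin averaging step over $i$ (rather than a crude union bound, which would cost an extra factor $k$) reduces matters to estimating, for a single column, $\Pr\big(\mathrm{dist}(X_i,H_i)\le t\sqrt{m-k+1}\big)$, where one uses $\sqrt m-\sqrt{k-1}\le\sqrt{m-k+1}$ to match the statement. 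Conditioning on $H_i$, which is independent of $X_i$, $\mathrm{dist}(X_i,H_i)=\|\Pi^{\bot}_{H_i}X_i\|_{\ell_2}$ is the norm of the orthogonal projection of the subgaussian vector $X_i$ onto the fixed subspace $H_i^{\bot}$ of dimension $d=m-k+1$, and the small-ball estimate $\Pr(\|\Pi^{\bot}_{H_i}X_i\|_{\ell_2}\le t\sqrt d)\le (c_1 t)^{d}$ for projections of subgaussian vectors yields the factor $(c_1 t)^{m-k+1}$.

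The main obstacle is this last small-ball estimate when $d=m-k+1$ is small. A subgaussian vector is not rotationally invariant, so one cannot reduce $\|\Pi^{\bot}_{H_i}X_i\|_{\ell_2}$ to a $\chi$-type quantity; expressing the projection in an orthonormal basis $u^{(1)},\dots,u^{(d)}$ of $H_i^{\bot}$, its coordinates are linear forms $\sum_l u^{(r)}_l (X_i)_l$ whose anti-concentration is governed by the arithmetic structure of the coefficient vectors (Littlewood--Offord theory), and one must control this uniformly over the random subspace $H_i^{\bot}$ and with constants depending only polynomially on $B$. This is the technical core of \cite{Rudelson08tall}; once it is granted, assembling the compressible and incompressible estimates together with the averaging over $i$ is routine, which is why in this paper we simply invoke the result.
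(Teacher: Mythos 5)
The paper does not actually prove this lemma: it is quoted verbatim as Theorem~1.1 of \cite{Rudelson08tall} and used as an external ingredient, so there is no internal proof to compare against. Your closing concession that the Littlewood--Offord small-ball core must be taken from that reference therefore lands you in the same place as the authors; the added value of your write-up is the sketch of how the cited proof goes, and the compressible half of that sketch (net over near-sparse vectors, lower tail of $\|\mathbf{X}x\|_{\ell_2}$ via tensorization, operator-norm control, net exponent beaten by $c(B)m$ since $m\ge k$) is accurate.

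The incompressible half, however, has a genuine quantitative gap as written. Your pointwise inequality is $\|\mathbf{X}x\|_{\ell_2}\ge |x_i|\,\mathrm{dist}(X_i,H_i)$, and incompressibility only guarantees $|x_i|\asymp k^{-1/2}$ on the spread coordinates, so the averaging lemma reduces $\Pr\bigl(s_k(\mathbf{X})\le t(\sqrt m-\sqrt{k-1})\bigr)$ to $\Pr\bigl(\mathrm{dist}(X_i,H_i)\le c\,t\sqrt{k}\,(\sqrt m-\sqrt{k-1})\bigr)$, not to $\Pr\bigl(\mathrm{dist}(X_i,H_i)\le t\sqrt{m-k+1}\bigr)$ as you claim. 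Writing $d=m-k+1$, that threshold equals $t\sqrt d\cdot\theta$ with $\theta\asymp\sqrt{kd/m}$, so the small-ball estimate $\Pr(\mathrm{dist}\le s\sqrt d)\le (c_1 s)^d$ returns $(c_1 t\sqrt{kd/m})^{d}$, which matches the claimed $(c_1t)^{d}$ only when $kd=O(m)$ --- essentially the nearly square case $d=O(1)$ or trivially small $k$. In the intermediate regime $1\ll d\ll m$ with $k\asymp m$, the single-column distance reduction is lossy by a factor growing with $d$; Rudelson--Vershynin circumvent this with a different reduction (projecting onto the orthocomplement of the span of all but roughly $d$ columns and applying a tall-matrix estimate to the resulting block, combined with the small-ball theorem for projections), while for $d\gtrsim m$ one can simply absorb $(c_1t)^d$ into $e^{-c_2 m}$ and use the pure net argument. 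So the architecture you describe is the right one, but the specific incompressible-vector step you wrote down would not yield the stated bound for general $m\ge k$.
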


In particular, the above lemma suggests that for subgaussian matrices, there is an 
exponentially small positive probability that $s_n =0$. We use this in the following
result.
\begin{thm}
Assume $m > k$. Given an index set $\mathcal{I} \subset \{1,2,\ldots,n\}$ with
$|\mathcal{I}| = k$, 
\ben
\Pr(\rank(\mathbf{A}_{\mathcal{I}}) < k) \le e^{-c_0 m}
\een
for some constant $c_0 > 0$.
\label{thm:bernoulli_singular}
\end{thm}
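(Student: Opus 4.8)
The plan is to reduce the rank deficiency event for the $m \times k$ submatrix $\mathbf{A}_{\mathcal I}$ directly to the small-singular-value estimate of Lemma~\ref{lem:singular_subgauss}. Observe that $\rank(\mathbf{A}_{\mathcal I}) < k$ is equivalent to $s_k(\mathbf{A}_{\mathcal I}) = 0$, so it suffices to bound $\Pr(s_k(\mathbf{A}_{\mathcal I}) = 0)$, and this is dominated by $\Pr(s_k(\mathbf{A}_{\mathcal I}) \le t(\sqrt m - \sqrt{k-1}))$ for every $t > 0$ (using $m > k$ so that $\sqrt m - \sqrt{k-1} > 0$). The one wrinkle is normalization: the paper assumes each column of $\mathbf{A}$ has unit $\ell_2$ norm, whereas Lemma~\ref{lem:singular_subgauss} is stated for a matrix $\mathbf{X}$ with i.i.d.\ variance-one entries, i.e.\ $\mathbf{A}_{\mathcal I} = \tfrac{1}{\sqrt m}\mathbf{X}$ in distribution (for the $\pm 1/\sqrt m$ Bernoulli ensemble and similar centered subgaussian ensembles). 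Scaling by $1/\sqrt m$ does not change the rank, so $\Pr(\rank(\mathbf{A}_{\mathcal I}) < k) = \Pr(s_k(\mathbf{X}) = 0)$ and the lemma applies verbatim to $\mathbf{X}$.

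The key steps, in order, are: (i) fix any $t \in (0, 1/c_1)$ with $c_1$ the constant from Lemma~\ref{lem:singular_subgauss}, so that $c_1 t < 1$; (ii) apply the lemma to $\mathbf{X}$ to get $\Pr(s_k(\mathbf{X}) = 0) \le \Pr\big(s_k(\mathbf{X}) \le t(\sqrt m - \sqrt{k-1})\big) \le (c_1 t)^{m-k+1} + e^{-c_2 m}$; (iii) bound the first term: since $c_1 t < 1$, $(c_1 t)^{m-k+1} = e^{-(m-k+1)\log(1/(c_1 t))} \le e^{-\beta(m-k+1)}$ with $\beta = \log(1/(c_1 t)) > 0$; (iv) absorb everything into a single exponential in $m$. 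For step (iv) I would use the linear-sparsity assumption $k = \lfloor \rho n \rfloor$ implicit in this section — in fact what matters is only that $m > k$; but to get a clean $e^{-c_0 m}$ one wants $m - k + 1 \ge \gamma m$ for some $\gamma > 0$, i.e.\ $k \le (1-\gamma)m$. In the regime of interest $m = c_1' k$ with $c_1' > 1$, so $k/m = 1/c_1' < 1$ and such a $\gamma$ exists; then $(c_1 t)^{m-k+1} \le e^{-\beta\gamma m}$ and $\Pr(\rank(\mathbf{A}_{\mathcal I}) < k) \le 2 e^{-\min(\beta\gamma, c_2) m} \le e^{-c_0 m}$ after adjusting $c_0$ slightly to kill the factor $2$ (valid for $m$ large).

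The main obstacle is bookkeeping around the hypothesis $m > k$ rather than $m \ge c' k$: if $m$ is allowed to be arbitrarily close to $k$ then $m - k + 1$ need not be a positive fraction of $m$, and $(c_1 t)^{m-k+1}$ degrades to a constant, not an exponential in $m$. So the statement as phrased really needs the ambient assumption of this section (linear regime, $m$ a fixed multiple of $k$) to convert $m - k + 1$ into $\Theta(m)$; I would make that dependence explicit and note that $c_0$ depends on the ratio $m/k$ (equivalently on $\rho$ and the target constant $c_1'$) as well as on the subgaussian moment $B$ through $c_1, c_2$. A secondary minor point is justifying the i.i.d.\ structure of the columns of $\mathbf{A}_{\mathcal I}$ and the exact scaling relating the unit-norm-column normalization to the variance-one normalization of Lemma~\ref{lem:singular_subgauss}; for the Bernoulli $\pm 1/\sqrt m$ ensemble this is exact, and for general centered subgaussian ensembles with the stated normalization it holds up to the harmless $1/\sqrt m$ rescaling that leaves rank invariant.
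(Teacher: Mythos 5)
Your proposal uses the same key ingredient as the paper (Lemma~\ref{lem:singular_subgauss}) and the same reduction of $\{\rank(\mathbf{A}_{\mathcal I})<k\}$ to $\{s_k(\mathbf{A}_{\mathcal I})=0\}$, and your care about the column-normalization versus variance-one normalization is a point the paper glosses over (harmlessly, since rank is scale-invariant). However, your execution with a \emph{fixed} $t\in(0,1/c_1)$ introduces an unnecessary complication, and your conclusion that the statement ``really needs'' $m-k+1\ge\gamma m$ (i.e.\ the linear regime) is not correct. The event $\{s_k(\mathbf{X})=0\}$ is contained in $\{s_k(\mathbf{X})\le t(\sqrt m-\sqrt{k-1})\}$ for \emph{every} $t>0$ (using only $m\ge k$ so that $\sqrt m-\sqrt{k-1}>0$), so one may take the infimum of the right-hand side of the lemma over $t>0$:
\begin{equation*}
\Pr\bigl(s_k(\mathbf{X})=0\bigr)\;\le\;\inf_{t>0}\Bigl[(c_1t)^{m-k+1}+e^{-c_2m}\Bigr]\;=\;e^{-c_2m},
\end{equation*}
since $m-k+1\ge 1$ forces $(c_1t)^{m-k+1}\to 0$ as $t\to 0$. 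This is exactly what the paper does (it phrases it as a limit $t\to 0$), and it yields the stated bound with $c_0=c_2$ for any $m>k$, with no requirement that $m-k$ be a positive fraction of $m$ and no dependence of $c_0$ on the ratio $m/k$. So the obstacle you flag in step (iv) is an artifact of fixing $t$ too early rather than optimizing it away; the theorem as phrased is fine without the linear-sparsity assumption.
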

\begin{proof}
To ensure recovery of $\mathbf{x}$, it is essential that 
$\rank(\mathbf{A}_{\mathcal{I}}) = k$ or equivalently, the smallest singular value,
$s_k(\mathbf{A}_{\mathcal{I}}) \ne 0$. Using Lemma \ref{lem:singular_subgauss}, and
choosing small $t$, we have
\begin{align*}
\Pr(s_k(\mathbf{A}_{\mathcal{I}}) = 0) &= \lim_{t\to 0} 
\Pr(s_k(\mathbf{A}_{\mathcal{I}}) \le t(\sqrt{m} - \sqrt{k-1})) \\
&\le e^{-c_0 m} .
\vspace*{-1.5cm} 
\end{align*}
\end{proof}
\begin{rem}
Reference~\cite{Akcakaya07} uses the fact that if $\mathbf{A}$ has i.i.d.~entries with
 $\mathcal{N}(0,1)$, then $\Pr(\rank(\mathbf{A}_{\mathcal{I}}) < k) = 0$, i.e.,
 $\mathbf{A}_{\mathcal{I}}$ can never be singular. For subgaussian matrices, it is 
possible for such an error to occur. 
For example, with the random sign matrices distributed
according to Bernoulli($\frac{1}{2}$), it is easy to see that
\ben
\Pr(\rank(\mathbf{A}_{\mathcal{I}}) < k) \ge \left( \frac{1}{2} \right)^k.
\een
Hence, Theorem~\ref{thm:bernoulli_singular} says that in the linear regime, 
the error decay for the event $\Omega_0$  is exponential with the number of
 measurements $m>k$. However, a sub-exponential decay to zero can be achieved even
for the sublinear case. The rest of our arguments are valid for both cases.  
\end{rem}
We first modify Lemma 3.3 from \cite{Akcakaya07} by introducing conditions under which
the result is still valid. We then show that the  subgaussian measurement matrices satisfy 
these conditions.
\begin{lem}
\begin{enumerate}
\item Let $\mathcal{I} = \supp(\mathbf{x})$ and assume that $\rank( 
\mathbf{A}_{\mathcal{I}}) = k$. Then for $\delta > 0$,
\begin{align*}
&\Pr\left( \left| \frac{1}{m}\|\Pi^{\bot}_{\mathbf{A}_{\mathcal{I}}} \mathbf{y} \|
^2_{\ell_2} - \frac{m-k}{m} \sigma^2 \right| > \delta \right) \\
&\hspace{1cm} \le 2 \exp\left(-\frac{\delta^2}{4 \sigma^4} \frac{m^2}{m-k + 
\frac{2\delta} {\sigma^2}m} \right).
\end{align*}
This result holds for any measurement matrix $\mathbf{A}$.
\item Let $\mathcal{J}$ be an index set such that $|\mathcal{J}| = k$ and $|\mathcal{I} 
\cap \mathcal{J}| = p < k$, where $\mathcal{I} = \supp(\mathbf{x})$ and assume that 
$\rank(\mathbf{A}_{\mathcal{J}}) = k$. Let 
\ben
V = \frac{\|\Pi^{\bot}_{\mathbf{A}_{\mathcal{J}}} \mathbf{y}\|^2_{\ell_2}}{\sigma^2_y}
-(m-k)
\een
where $\sigma^2_y = \sum_{i\in \mathcal{I} \backslash \mathcal{J}} x_i^2 + \sigma^2$. 
Then $\mathbf{y}$ and $\mathcal{J}$ are $\delta$-joint typical with probability
\begin{align*}
&\Pr\left( \left| \frac{1}{m}\|\Pi^{\bot}_{\mathbf{A}_{\mathcal{J}}} \mathbf{y} \|
^2_{\ell_2} - \frac{m-k}{m} \sigma^2 \right| < \delta \right)\\ 
&\hspace{1cm} \le 2 \exp \left(- \frac{1}{2\gamma_1} \left( (m-k)
\left(1-\frac{\sigma^2}{\sigma^2_y} \right) - \frac{\delta}{\sigma^2_y} m \right)^2
\right)
\end{align*}
if the moment condition 
\be
\log \E\lbrack e^{tV} \rbrack \le -\gamma_1 t - \frac{\gamma_1}{2}\log (1-\gamma_2 t) 
\label{eq:moment_cond}
\ee
is satisfied with constants $\gamma_1,\ \gamma_2 > 0$ for $t < 1/\gamma_2$. 
\end{enumerate}
\label{lem:error_bounds}
\end{lem}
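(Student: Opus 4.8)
The plan is to treat both parts by the same mechanism: conditioned on $\mathbf{A}_{\mathcal I}$ (resp.\ $\mathbf{A}_{\mathcal J}$) having full column rank $k$, the quantity $\|\Pi^{\bot}_{\mathbf{A}_{\mathcal I}}\mathbf{y}\|^2_{\ell_2}$ is a quadratic form in a Gaussian vector whose distribution we can identify, and then apply a Chernoff/large-deviation bound on that quadratic form. For part (1), note that since $\mathcal I=\supp(\mathbf{x})$, the signal component lies entirely in the column space of $\mathbf{A}_{\mathcal I}$, so $\Pi^{\bot}_{\mathbf{A}_{\mathcal I}}\mathbf{y} = \Pi^{\bot}_{\mathbf{A}_{\mathcal I}}\mathbf{z}$. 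Because $\Pi^{\bot}_{\mathbf{A}_{\mathcal I}}$ is an orthogonal projection of rank $m-k$ and $\mathbf{z}\sim\mathcal N(0,\sigma^2\iden)$, the variable $\|\Pi^{\bot}_{\mathbf{A}_{\mathcal I}}\mathbf{z}\|^2_{\ell_2}/\sigma^2$ is exactly $\chi^2_{m-k}$, \emph{independently of the realization of} $\mathbf{A}_{\mathcal I}$. Hence part (1) reduces to a standard two-sided Chernoff bound for a chi-square random variable: write the moment generating function $\E[e^{t(\chi^2_{m-k})}] = (1-2t)^{-(m-k)/2}$, optimize over $t$ for both tails of the deviation event $\{|\tfrac1m\|\cdot\|^2 - \tfrac{m-k}{m}\sigma^2| > \delta\}$, and simplify; the stated bound $2\exp(-\tfrac{\delta^2}{4\sigma^4}\cdot\tfrac{m^2}{m-k+2\delta m/\sigma^2})$ follows from the usual inequality $-\tfrac12\log(1-2u)\le u + u^2/(1-2u)$ type estimate. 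The claim that this holds for \emph{any} measurement matrix is immediate since the chi-square law did not depend on $\mathbf{A}_{\mathcal I}$.

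For part (2), the situation differs because $\mathcal J$ misses the indices in $\mathcal I\setminus\mathcal J$, so the signal no longer lies in $\mathrm{col}(\mathbf{A}_{\mathcal J})$. Write $\mathbf{y} = \mathbf{A}_{\mathcal I\cap\mathcal J}\mathbf{x}_{\mathcal I\cap\mathcal J} + \mathbf{A}_{\mathcal I\setminus\mathcal J}\mathbf{x}_{\mathcal I\setminus\mathcal J} + \mathbf{z}$; applying $\Pi^{\bot}_{\mathbf{A}_{\mathcal J}}$ kills the first term, leaving $\Pi^{\bot}_{\mathbf{A}_{\mathcal J}}(\mathbf{A}_{\mathcal I\setminus\mathcal J}\mathbf{x}_{\mathcal I\setminus\mathcal J} + \mathbf{z})$. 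The idea is to model $\mathbf{w} := \mathbf{A}_{\mathcal I\setminus\mathcal J}\mathbf{x}_{\mathcal I\setminus\mathcal J} + \mathbf{z}$ as a vector with (coordinate-wise) variance $\sigma^2_y = \sum_{i\in\mathcal I\setminus\mathcal J}x_i^2 + \sigma^2$ — this is exact when the entries of $\mathbf{A}$ are independent with the appropriate normalization — so that $\|\Pi^{\bot}_{\mathbf{A}_{\mathcal J}}\mathbf{w}\|^2_{\ell_2}/\sigma^2_y$ is again (approximately) $\chi^2_{m-k}$, which is why $V = \|\Pi^{\bot}_{\mathbf{A}_{\mathcal J}}\mathbf{y}\|^2_{\ell_2}/\sigma^2_y - (m-k)$ is introduced as the centered version. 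We want to upper bound $\Pr(\tfrac1m\|\Pi^{\bot}_{\mathbf{A}_{\mathcal J}}\mathbf{y}\|^2 < \tfrac{m-k}{m}\sigma^2 + \delta)$, i.e.\ a \emph{lower-tail} event for $V$: rewriting, this is $\Pr(V < (m-k)(\tfrac{\sigma^2}{\sigma^2_y}-1) + \tfrac{\delta}{\sigma^2_y}m) = \Pr(V < -[(m-k)(1-\tfrac{\sigma^2}{\sigma^2_y}) - \tfrac{\delta}{\sigma^2_y}m])$. Apply the exponential Markov inequality $\Pr(V < -a) \le e^{-ta}\E[e^{-tV}]$ for $t>0$ with $a = (m-k)(1-\tfrac{\sigma^2}{\sigma^2_y}) - \tfrac{\delta}{\sigma^2_y}m$; substituting the moment hypothesis \eqref{eq:moment_cond}, $\log\E[e^{-tV}]\le \gamma_1 t - \tfrac{\gamma_1}{2}\log(1+\gamma_2 t)$ (using the bound at $-t$; one must be slightly careful with signs and the range $t<1/\gamma_2$), and then optimizing the resulting exponent $-ta + \gamma_1 t - \tfrac{\gamma_1}{2}\log(1+\gamma_2 t)$ over $t$ yields, via the elementary inequality $u - \tfrac12\log(1+2u)\ge \tfrac{u^2}{1+2u}\ge$ (quadratic lower bound near $0$), a bound of the form $2\exp(-\tfrac{1}{2\gamma_1}(\ldots)^2)$ matching the statement. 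The factor $2$ and the clean quadratic form come from relaxing the exact Chernoff optimum to the sub-gamma tail estimate $\exp(-\min(\tfrac{a^2}{4\gamma_1},\tfrac{a}{4\gamma_2}))$-type bound and keeping the Gaussian-regime term.

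The main obstacle is part (2)'s moment condition \eqref{eq:moment_cond}: the quantity $V$ is a quadratic form $\mathbf{w}^\T \Pi^{\bot}_{\mathbf{A}_{\mathcal J}}\mathbf{w}/\sigma^2_y - (m-k)$ in which both the vector $\mathbf{w}$ and the projector $\Pi^{\bot}_{\mathbf{A}_{\mathcal J}}$ depend on the random matrix $\mathbf{A}$, so the ``$\chi^2_{m-k}$'' heuristic is only exact in the Gaussian case and must be replaced, for subgaussian ensembles, by a genuine estimate of the MGF — this is precisely the point where the hypothesis \eqref{eq:moment_cond} is imposed rather than derived, and the subsequent section presumably verifies it for subgaussian $\mathbf{A}$. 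Within the present lemma, however, \eqref{eq:moment_cond} is \emph{assumed}, so the work is confined to the deterministic manipulation of exponents above; the only care needed is (i) correctly reducing the two-sided typicality event to a one-sided tail of $V$, (ii) tracking that the relevant direction uses $e^{-tV}$ with $t>0$ and hence the hypothesis evaluated appropriately within its domain of validity, and (iii) invoking the standard ``$x - \tfrac{c}{2}\log(1+\tfrac{2x}{c})\le \tfrac{x^2}{c}$''-style inequality to pass from the optimized Chernoff exponent to the stated quadratic bound.
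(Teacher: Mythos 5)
Your proposal is correct and follows essentially the same route as the paper: part (1) reduces to the exact $\chi^2_{m-k}$ law of $\|\Pi^{\bot}_{\mathbf{A}_{\mathcal I}}\mathbf{z}\|^2_{\ell_2}/\sigma^2$ (valid for any full-rank $\mathbf{A}_{\mathcal I}$ by rotation invariance of $\mathbf{z}$) plus the standard two-sided chi-square concentration bound, and part (2) reduces the typicality event to a tail event for $V$ and applies a Chernoff bound driven by the assumed moment condition \eqref{eq:moment_cond}, exactly as the paper does via its Birg\'e--Massart-type inequalities \eqref{eq:ge}--\eqref{eq:le} with the choices $\lambda_1,\lambda_2$. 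The only cosmetic difference is that you pass directly to the single dominant lower-tail event $\Pr(V<-a)$, whereas the paper formally retains both one-sided terms and keeps the larger one (yielding the factor $2$ and the exponent $\lambda_1$); your constant-chasing in the Chernoff optimization is no less careful than the paper's own appendix.
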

\begin{proof}
The proof for the first item is the same as that of the proof of the first part given in
\cite[Lemma 3.3]{Akcakaya07}. We have
\ben
\Pi^{\bot}_{\mathbf{A}_{\mathcal{I}}} \mathbf{y} = \Pi^{\bot}_{\mathbf{A}_{\mathcal{I}}}
\mathbf{z},
\een
and
\ben
\Pi^{\bot}_{\mathbf{A}_{\mathcal{J}}} \mathbf{y} = \Pi^{\bot}_{\mathbf{A}_{\mathcal{J}}}
\left(\sum_{i\in \mathcal{I}\backslash\mathcal{J}} x_i \mathbf{a}_i + \mathbf{z} 
\right).
\een
It can be seen that, by the property of symmetric projection matrices, 
$\Pi^{\bot \T}_{\mathbf{A}_{\mathcal{I}}} \Pi^{\bot}_{\mathbf{A}_{\mathcal{I}}} = 
\Pi^{\bot}_{\mathbf{A}_{\mathcal{I}}}$. Furthermore, $\mathbf{z}$ is independent of the
entries of $\Pi^{\bot}_{\mathbf{A}_{\mathcal{I}}}$. Hence 
by \cite[Chapter 18]{Johnson95vol1}, 
\ben
\frac{\|\Pi^{\bot}_{\mathbf{A}_{\mathcal{I}}} \mathbf{z}\|^2_{\ell_2}}{\sigma^2} = 
\left(\frac{\mathbf{z}}{\sigma}\right)^\T \Pi^{\bot}_{\mathbf{A}_{\mathcal{I}}} 
\left(\frac{\mathbf{z}}{\sigma}\right) \sim \chi^2_{m-k}
\een
By using concentration inequalities of chi-squared random variables around their degrees 
of freedom ($m-k$ here) as in \cite[Lemma 3.3]{Akcakaya07}, the same result is obtained. 

For the second part of the lemma, we have
\begin{align*}
&\Pr\left(\left|\frac{1}{m}\|\Pi^{\bot}_{\mathbf{A}_\mathcal{J}}\mathbf{y}\|_{\ell_2}^2 
- \frac{m-k}{m}\sigma^2 \right| < \delta \right)\\
&= \Pr\left(\frac{1}{m}\|\Pi^{\bot}_{\mathbf{A}_\mathcal{J}}\mathbf{y}\|_{\ell_2}^2 - 
\frac{m-k}{m}\sigma^2 < \delta \right)\\
&\hspace{1cm} + \Pr\left(\frac{1}{m}\| \Pi^{\bot}_{\mathbf{A}_\mathcal{J}}\mathbf{y}\|_
{\ell_2}^2 - \frac{m-k}{m}\sigma^2 > -\delta \right)\\
&= \Pr\left(V < -(m-k)\left(1 - \frac{\sigma^2}{\sigma^2_y}\right) 
+ \frac{\delta} {\sigma_y^2}m \right)\\
&\hspace{1cm} + \Pr\left(V > -(m-k)\left(1
-\frac{\sigma^2} {\sigma^2_y} \right) - \frac{\delta}{\sigma_y^2}m \right).
\end{align*}
Using Chernoff's bound and the moment condition, it can be shown that for any $\lambda >
0$ (see Appendix), 
\be
\Pr(V \ge \gamma_2 \lambda +  \sqrt{2 \gamma_1\lambda}) \le 
\Pr(V \ge  \sqrt{2 \gamma_1\lambda}) \le e^{-\lambda}
\label{eq:ge}
\ee
and
\be
\Pr(V \le -\sqrt{2\gamma_1 \lambda}) \le e^{-\lambda}.
\label{eq:le}
\ee
We bound the first probability by choosing in equation \eqref{eq:ge},
\ben
\lambda_1 = \frac{1}{2\gamma_1} \left( (m-k) \left(1-\frac{\sigma^2}{\sigma^2_y} 
\right) - \frac{\delta}{\sigma^2_y} m \right)^2
\een
and for the second,
\ben
\lambda_2 = \frac{1}{2\gamma_1} \left( (m-k) \left(1-\frac{\sigma^2}{\sigma^2_y} 
\right) + \frac{\delta}{\sigma^2_y} m \right)^2
\een
in equation (\ref{eq:ge}), we have
\ben
\Pr\left(\left|\frac{1}{m}\|\Pi^{\bot}_{\mathbf{A}_\mathcal{J}}\mathbf{y}\|_{\ell_2}^2 
- \frac{m-k}{m}\sigma^2 \right| < \delta \right) \le 2 \exp(-\lambda_1).
\een
since $\lambda_1 \le \lambda_2$.
\end{proof}

\begin{thm}
Subgaussian measurement matrices satisfy Lemma \ref{lem:error_bounds} with $\gamma_1 = 
m-k$ and $\gamma_2 = 2$.
\label{thm:subgauss_cond}
\end{thm}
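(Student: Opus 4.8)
The plan is to verify the moment condition \eqref{eq:moment_cond} directly, by bounding $\E[e^{tV}]$ above by the moment generating function of a centred $\chi^2_{m-k}$ variable. Since $-(m-k)t-\tfrac{m-k}{2}\log(1-2t)$ is precisely $\log\E\!\big[e^{t(\chi^2_{m-k}-(m-k))}\big]$, this is the correct target and it pins down $\gamma_1=m-k$, $\gamma_2=2$. Write $\Pi:=\Pi^{\bot}_{\mathbf A_{\mathcal J}}$; under the standing hypothesis $\rank(\mathbf A_{\mathcal J})=k$ this is an orthogonal projection of rank $m-k$, and, crucially, it is a function of $\mathbf A_{\mathcal J}$ alone, hence independent of $\mathbf z$ and of the columns $\{\mathbf a_i:i\in\mathcal I\setminus\mathcal J\}$. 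Because $\Pi$ annihilates $\mathbf a_i$ for $i\in\mathcal I\cap\mathcal J$, writing $\mathbf y':=\mathbf y/\sigma_y$ we have $V+(m-k)=\mathbf y'^{\T}\Pi\mathbf y'$, with $\mathbf y'=\sigma_y^{-1}\big(\sum_{i\in\mathcal I\setminus\mathcal J}x_i\mathbf a_i+\mathbf z\big)$.

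The key device is a Gaussian decoupling identity that linearises the quadratic form: the elementary fact that $e^{\mathbf u^{\T}M\mathbf u}=\E_{\mathbf h}\!\big[e^{\sqrt2\,\mathbf u^{\T}M^{1/2}\mathbf h}\big]$ for $M\succeq0$ and $\mathbf h\sim\mathcal N(0,\iden_m)$, applied with $M=t\Pi$ and $\mathbf u=\mathbf y'$ after conditioning on everything but $\mathbf h$, gives $e^{t\mathbf y'^{\T}\Pi\mathbf y'}=\E_{\mathbf h}\!\big[e^{\sqrt{2t}\,\langle\mathbf y',\Pi\mathbf h\rangle}\big]$. Taking expectations and exchanging the order of integration (Tonelli, all integrands positive), $\E\big[e^{t\mathbf y'^{\T}\Pi\mathbf y'}\big]=\E_{\mathbf h}\!\big[\E\big(e^{\sqrt{2t}\langle\mathbf y',\mathbf c\rangle}\mid\mathbf A_{\mathcal J},\mathbf h\big)\big]$ with $\mathbf c:=\Pi\mathbf h$. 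The inner expectation factorises over the independent summands of $\mathbf y'$, and each factor is bounded by the coordinatewise sub-Gaussian estimate: a zero-mean, variance-one entry satisfies $\E[e^{\lambda A_{ji}}]\le e^{\lambda^2/2}$, so $\E[e^{\lambda\langle\mathbf a_i,\mathbf c\rangle}]\le e^{\lambda^2\|\mathbf c\|^2/2}$, while $\mathbf z\sim\mathcal N(0,\sigma^2\iden)$ contributes $e^{\lambda^2\sigma^2\|\mathbf c\|^2/2}$. Summing the exponents with $\lambda=\sqrt{2t}\,x_i/\sigma_y$ for the leakage columns and $\lambda=\sqrt{2t}/\sigma_y$ for the noise, the total exponent collapses to $t\|\mathbf c\|^2\big(\sum_{i\in\mathcal I\setminus\mathcal J}x_i^2+\sigma^2\big)/\sigma_y^2=t\|\mathbf c\|^2$ by the definition of $\sigma_y^2$. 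Hence $\E\big[e^{t\mathbf y'^{\T}\Pi\mathbf y'}\big]\le\E_{\mathbf h}\!\big[e^{t\|\Pi\mathbf h\|^2}\big]=\E\big[e^{t\chi^2_{m-k}}\big]=(1-2t)^{-(m-k)/2}$ for $t<\tfrac12$, using $\mathbf h^{\T}\Pi\mathbf h\sim\chi^2_{m-k}$. Multiplying by $e^{-(m-k)t}$ gives $\log\E[e^{tV}]\le-(m-k)t-\tfrac{m-k}{2}\log(1-2t)$, which is exactly \eqref{eq:moment_cond} with $\gamma_1=m-k$, $\gamma_2=2$; Lemma~\ref{lem:error_bounds}(2) then applies as stated.

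The step on which everything hinges is the entrywise bound $\E[e^{\lambda A_{ji}}]\le e^{\lambda^2/2}$ with the \emph{Gaussian} variance proxy. This is exact for the $\pm1$ Bernoulli ensemble (and for symmetric entries dominated by their standard deviation, or Gaussian entries), so for those the theorem holds precisely as stated. For a general centred sub-Gaussian entry with sub-Gaussian moment $B$ one only gets $\E[e^{\lambda A_{ji}}]\le e^{cB^2\lambda^2}$ for a universal $c$, and carrying this through gives \eqref{eq:moment_cond} with $\gamma_1=m-k$ but $\gamma_2$ replaced by a fixed multiple of $B^2$ — still of the required form and still enough to force $m=O(k)$ in the achievability results, though not the literal constant. (Implicit throughout is the normalisation in which the entries of $\mathbf A$ have variance one, which is what makes $\sigma_y^2=\sum_{i\in\mathcal I\setminus\mathcal J}x_i^2+\sigma^2$ the exact per-coordinate variance of $\mathbf y$; with any other scaling of $\mathbf A$ a compensating factor must be carried along. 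The rank hypothesis is needed only to fix $\rank\Pi=m-k$; the complementary event is handled separately via Theorem~\ref{thm:bernoulli_singular}.)
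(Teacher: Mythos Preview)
Your argument is correct and reaches the same conclusion as the paper, but by a genuinely different route. The paper obtains the bound $\E[e^{tV}]\le e^{-(m-k)t}(1-2t)^{-(m-k)/2}$ by first invoking closure of the sub-Gaussian class under sums to conclude that $\mathbf y/\sigma_y$ is coordinatewise sub-Gaussian, and then quoting a black-box moment estimate (Lemma~1.2 of \cite{Mikosch91}) for quadratic forms in sub-Gaussian vectors. You instead linearise the quadratic form via the Gaussian identity $e^{\mathbf u^{\T}M\mathbf u}=\E_{\mathbf h}[e^{\sqrt{2}\,\mathbf u^{\T}M^{1/2}\mathbf h}]$, swap the order of integration, and reduce to a product of one-dimensional sub-Gaussian mgf bounds, recovering the $\chi^2_{m-k}$ mgf directly. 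Your approach is self-contained and makes the dependence on the sub-Gaussian proxy transparent; the paper's approach is shorter but hides the same dependence inside the constant $\alpha'$ that governs the range $0\le t<1/(2\alpha')$. Your closing caveat---that the literal value $\gamma_2=2$ requires the Gaussian variance proxy, while a general sub-Gaussian entry only gives $\gamma_2=O(B^2)$---is well taken and applies equally to the paper's version.

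One point worth flagging: your decoupling identity needs $t\ge 0$ (so that $M=t\Pi\succeq 0$), yet you assert the bound ``for $t<\tfrac12$''. The moment condition \eqref{eq:moment_cond} as stated in Lemma~\ref{lem:error_bounds} is required for all $t<1/\gamma_2$, and the Appendix does use negative $t$ to control the lower tail in \eqref{eq:le}. The paper treats $t<0$ by a separate (and somewhat loose) monotonicity argument rather than via the Mikosch bound; you omit this case entirely. It is easy to patch---e.g.\ using $\E[V]=0$ (which follows from the unit-variance normalisation you already note) together with convexity of the cumulant generating function, or a direct Chernoff bound on $-V$---but as written your derivation only certifies $t\in[0,\tfrac12)$.
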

\begin{proof}
We only need to show how subgaussian measurement matrices satisfy Lemma 
\ref{lem:error_bounds}(2). We first note that subgaussian r.v.s have a closure property
under addition. Hence, the vector
\ben
\mathbf{y} = \sum_{i\in \mathcal{I}\backslash\mathcal{J}} x_i \mathbf{a}_i + \mathbf{z}
\een
is still subgaussian since for some constant $\alpha$ \cite{Mikosch91},
\begin{align*}
\E\lbrack e^{t\mathbf{y}} \rbrack 
&\le \exp\left( \frac{t^2}{2}(\sum_{i\in \mathcal{I} \backslash \mathcal{J}} x_i^2 
\alpha^2 + \sigma^2) \mathbf{1} \right)
\le \exp\left( \frac{t^2}{2}(\alpha' \sigma_y^2) \mathbf{1} \right)
\end{align*}
where $\mathbf{1}$ is the column vector of 1s, $\alpha' > 0$ is a constant and
\ben
\sigma_y^2 = \sum_{i\in \mathcal{I} \backslash \mathcal{J}} x_i^2 + \sigma^2.
\een
Note that the vector is independent of the entries of $\Pi^{\bot}_ {\mathbf{A} 
_\mathcal{J}}$. 

Since $\Pi^{\bot}_{\mathbf{A}_\mathcal{J}}$ is a symmetric and idempotent,
we rewrite
\ben
\frac{\|\Pi^{\bot}_{\mathbf{A}_{\mathcal{J}}}\mathbf{y} \|^2_{\ell_2}}{\sigma_y^2} =
\left( \frac{\mathbf{y}}{\sigma_y} \right)^\T \Pi^{\bot}_{\mathbf{A}_\mathcal{J}} \left(
\frac{\mathbf{y}}{\sigma_y} \right).
\een

To bound the moment, we require an estimate using \cite[Lemma 1.2]{Mikosch91}, for $0
\le t < 1/(2\alpha')$,
\ben
\E\left[ \exp (tV) \right] \le e^{-t (m-k)} \cdot (1-2t)^{-(m-k)/2}.
\een
Note that the upper bound is the moment generating function of distribution
$\chi^2_{m-k}$. 

The function $\log \E\lbrack \exp (tV)\rbrack$ is monotonically decreasing in $t < 0$
and at $t=0$, we have $\log \E\lbrack \exp (tV) \rbrack \le 0$. On the other hand, the
function $(m-k)t^2$ is monotonically increasing for $t < 0$. As such, we have $\log
\E\lbrack \exp (tV) \rbrack \le (m-k)t^2$ for $t < 0$. Hence, it can be easily seen that
$\gamma_1 = m-k$ and $\gamma_2 = 2$. 
\end{proof}
With $\gamma_1 = m-k$, $\gamma_2 = 2$, $\hat{\sigma}= 1 - \frac{\sigma^2}{\sigma_y^2}$ and 
$\delta^{\prime} = \delta m/(m-k)$,
\begin{align*}
\Pr(\Omega_{\mathcal{J}}) &\le 2 \exp \left(-\frac{1}{2(m-k)} \left( (m-k)
\hat{\sigma} - \frac{\delta}{\sigma^2_y} m \right)^2
\right) \\
&\le 2 \exp \left(-\frac{m-k}{4} \left( \frac{\sigma_y^2 - \sigma^2 - \delta'}
{\sigma_y^2} \right)^2 \right) \\
&= 2\exp \left(-\frac{m-k}{4} \left(\frac{\sum_{k \in \mathcal{I}\backslash \mathcal{J}}
 x_k^2 - \delta'}{\sum_{k \in \mathcal{I}\backslash \mathcal{J}} x_k^2 + \sigma^2}
 \right)^2 \right).
\end{align*}
Assuming $\rank(\mathbf{A}_{\mathcal{J}}) = k$, the number of subsets $\mathcal{J}$ that
overlaps $\mathcal{I}$ in $p$ indices is upper-bounded by
$\binom{k}{p}\binom{n-k}{k-p}$,
implying that by (\ref{eq:err_upper_bnd}) and  Theorem~\ref{lem:error_bounds},
\begin{align*}
\nonumber
p_{err}(\mathcal{D}|\mathbf{x}) 
& \le \exp(-c_0 m) + 2 \exp\left(-\frac{\delta^2}{4 \sigma^4} \frac{m^2}{m-k +
\frac{2\delta}{\sigma^2}m} \right)\\
\nonumber
&\hspace*{-1.3cm} + 2 \sum_{p=1}^k \binom{k}{p}\binom{n-k}{k-p} \exp (-\frac{m-k}{4}
\left(\frac{\underset{k \in \mathcal{I}\backslash \mathcal{J}}{\sum} x_k^2 - \delta'}{\underset{k \in
\mathcal{I}\backslash \mathcal{J}}{\sum} x_k^2 + \sigma^2} \right)^2 ).
\end{align*}

We sketch an outline of the rest of the proof here. Only $\Pr(\Omega_{\mathcal{J}})$
changes depending on the error metric. Let $|\mathcal{I} \cap \mathcal{J}| = p$ for some
particular set $\mathcal{J}$. For error metric 1, we note that $\sum_{k \in
\mathcal{I}\backslash \mathcal{J}} x_k^2 \ge (k-p)\mu^2 (\mathbf{x})$. For error
metric 2, since we only need $\Pr(\Omega_{\mathcal{J}}) \to 0$ for $p \le (1-\alpha)k$
for $\alpha \in (0,1)$, then we have $\sum_{k \in \mathcal{I}\backslash \mathcal{J}}
x_k^2 \ge \alpha k \mu^2 (\mathbf{x})$ for error to occur. Finally, for error
metric 3, we have $\sum_{k \in \mathcal{I}\backslash \mathcal{J}} x_k^2 \ge
\gamma P$ for error to occur. The rest of the arguments on bounding the error
probability follows that of the analysis on Gaussian measurement ensembles in
\cite{Akcakaya07}, both in the linear and sublinear sparsity regimes.

%
\section{Converse on the  Number of Measurements \label{sec:num:converse}}
Our starting point is again the signal recovery  model in \eqref{eq:model_one}.
 For simplicity, assume that
$\bx$ has $k$ non-zero entries. Further more, the entries of $\bA$
are taken from some alphabet $\mathcal{A}$, and normalized, i.e., for
each column $a_i$,
\begin{align}
\frac{1}{m} \|a_i\|_{\ell_2}^2 = 1\,\,, \,\, a_{ij} \in \mathcal{A}
\label{eq:alph:constraint}
\end{align}
Note that the measurement matrix $\bA$ is specified in advance
without the knowledge of the instantaneous realization of $\bx$. So
$\bA$ depends only on the global properties of $\bx$ and the noise
statistics. For simplicity (also for practical reasons), we make the
mild assumption  that there is no prior knowledge about the input
values favoring any particular locations. This implies that the
support of $\bx$ is uniformly chosen from the $\binom nk$ possible
choices.
%

Our discussion in this section is for the error metric 1, but can be
tailored for other purposes too. Recall that for the first metric,
we are interested in recovering the support of $\bx$ based on $m$
measurements from \eqref{eq:model_one}. The error probability in
recovering the support lower-bounds that of exact signal recovery.
This can be easily seen by imagining a genie which tells the
receiver  about the non-zero components in the order  of their
appearance.

We need some notation to proceed. Let us define the following: \\
\begin{tabular}{lcp{7.0cm}}
$\bar{\alpha}$ &-& the vector of non-zero values of $\bx$, in descending order
of magnitude, the $i^{th}$ entry being $\alpha_i$. \\
$\beta$ &-& non-zero values of $\bx$ in the order of  appearance. \\
$I_o$ &-& set of indices of $\bx$ with zero magnitude. \\
$\rho(\alpha_i)\!\!\!$ &-& index in $\bx$ corresponding to the $i^{th}$ entry of $\alpha$. \\
\end{tabular}
\begin{tabular}{lp{6.2cm}}
$\bar{R}(k,\bar{\alpha}, \sigma^2)$  - &  capacity  region of a  $k$-user
single antenna  Gaussian MAC with channel gains $\bar{\alpha}$, and input constraints
as in \eqref{eq:alph:constraint}.
\end{tabular}

 Let $\hat{\bx}$ be the recovered vector using some decoding method. In this section, we assume that
 $m$ is large enough, with respect to $k$ and in relation to $n$,
 to ensure that the probability of decoding error tends
 to zero as $m, n$ and $k$ tend to infinity. The error event can be
 written in terms of a random variable $\Phi$, which is defined as,
\begin{align}
\Phi = (\left(\prod_{i:\bx_i=0} \ind_{\{\hat{\bx}_i=0\}} \right). \left(\prod_{i:\bx_i \neq 0} \ind_{\{\hat{\bx}_i \neq 0\}}\right)).
\end{align}

Given the $k$ non-zero symbols $\beta$,
 $\Phi$ is induced by a uniform distribution
on the $\binom nk$ possible supporting indices of the vector $\bx$.
In many practical cases, $\beta$ is drawn from some distribution.
Our results can be extended to handle this, but presently we stick
to fixed $\beta$, and we assume all the components of $\beta$ are
distinct. The later assumption is just for saving some notation, and
has no bearing on the technical details.
 The average error probability now becomes,
\begin{align}
P_{error} = \Pr(\Phi = 0).
\end{align}
The following lemma yields a lower bound on $m$, the number of
measurements required for asymptotically \textit{exact} support
recovery.
\begin{lem} \label{lemma:converse1}For a given $\beta$ with $k$ non-zero elements,
if $P_{error}$ goes to zero with $m$, then
\begin{align}
m \geq \frac{k \log (n/k)}{R_{CMAC}(k,\bar{\alpha},\sigma^2)}
\end{align}
where
\begin{equation}
 R_{CMAC}(k,\bar{\alpha},\sigma^2) =
    \min_{\alpha^*} \max_{R \in \mathbb{R}^k} \|R\|_{\ell_1} . \ind_{\{R \in \bar{R}(k,\alpha^*,\sigma^2)\}}
    \label{eq:CMAC_sumrate}
\end{equation}
and $\alpha^*$ is any permutation of the channel coefficients $\bar{\alpha}$.
\end{lem}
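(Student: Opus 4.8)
The plan is to establish a correspondence between the compressed sensing recovery problem and a compound multiple access channel (MAC), and then invoke a converse (Fano-type) bound for that channel. First I would fix the $k$ non-zero values $\beta$ and observe that, since the support is uniform over the $\binom{n}{k}$ possibilities, recovering the support is equivalent to identifying which of $\binom{n}{k}$ ``messages'' was sent through the channel $\mathbf{y} = \mathbf{A}\mathbf{x} + \mathbf{z}$. The catch is that the decoder (hence implicitly the codebook) is fixed in advance via $\mathbf{A}$, but the \emph{assignment} of the values $\beta$ to the chosen $k$ positions is not controlled: an adversary could permute $\beta$ among those positions. So the relevant channel is a \emph{compound} $k$-user Gaussian MAC, where user $i$ occupies one of the $k$ active columns, the channel gains are the entries of $\bar\alpha$ (up to an unknown permutation $\alpha^*$), and the per-column normalization \eqref{eq:alph:constraint} plays the role of the input power constraint. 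Reliable support recovery as $m,n,k\to\infty$ forces the total number of bits $\log\binom{n}{k}$ to be below $m$ times the compound-channel sum-capacity.

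The key steps, in order: (i) Lower bound $\log\binom{n}{k} \geq k\log(n/k)$, so it suffices to show $m \cdot R_{CMAC} \geq \log\binom{n}{k}(1-o(1))$, or more precisely that $m R_{CMAC} \geq k\log(n/k)$ in the limit. (ii) Set up the genie argument already sketched in the text: a genie revealing the non-zero components in order of appearance only makes the decoder stronger, so the error probability of this enhanced decoder also vanishes; this reduces the problem to decoding the \emph{positions} given the values. (iii) Identify the induced channel as a $k$-user Gaussian MAC with gains $\bar\alpha$: each active column $\mathbf{a}_i$ carries a symbol, and because the positions are unknown the code must work simultaneously for every permutation $\alpha^*$ of $\bar\alpha$ matched to the users --- this is precisely a compound MAC. (iv) Apply the standard converse for the compound MAC: any rate tuple $R$ that is reliably decodable must lie in $\bar R(k,\alpha^*,\sigma^2)$ for \emph{every} admissible $\alpha^*$, hence in the intersection; by Fano's inequality the achieved sum rate $\|R\|_{\ell_1}$ per channel use is at most $\max_{R} \|R\|_{\ell_1}\cdot\ind_{\{R\in\bar R(k,\alpha^*,\sigma^2)\}}$ for the worst (minimizing) $\alpha^*$, which is exactly $R_{CMAC}(k,\bar\alpha,\sigma^2)$. (v) Since the total information conveyed over $m$ uses must be at least $\log\binom{n}{k} \geq k\log(n/k)$, rearrange to get $m \geq k\log(n/k)/R_{CMAC}(k,\bar\alpha,\sigma^2)$.

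I expect the main obstacle to be step (iii)--(iv): making rigorous the claim that the appropriate capacity notion is the \emph{compound}-MAC sum-rate with the $\min$ over permutations, rather than a single fixed MAC. One must argue carefully that the decoder's ignorance of the position-to-value assignment genuinely forces it to contend with all permutations $\alpha^*$ simultaneously (so the intersection of capacity regions, equivalently the $\min_{\alpha^*}$, is unavoidable), and conversely that no larger penalty than this is incurred --- i.e. that the sum-rate $\log\binom{n}{k}/m$ really must fit inside $\bar R(k,\alpha^*,\sigma^2)$ for each $\alpha^*$. A secondary technical point is handling the asymptotics uniformly as $k,n,m\to\infty$: the Fano bound carries an $o(1)$ term and the region $\bar R$ depends on $k$, so one should phrase the conclusion as a limiting inequality (or absorb the lower-order terms into the statement), which is consistent with the ``$m$ is large enough'' hypothesis already assumed in the lemma.
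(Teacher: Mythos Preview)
Your overall strategy matches the paper's: reduce support recovery to reliable communication over a compound $k$-user Gaussian MAC and then invoke the sum-rate converse. The place where your proposal is incomplete is exactly the point you yourself flag as the main obstacle, step~(iii).

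You say ``each active column $\mathbf{a}_i$ carries a symbol'' and declare this a $k$-user MAC, but you never specify what the $k$ users' \emph{message sets} are. As written, you only have a single transmitter choosing one of $\binom{n}{k}$ supports and sending a superposition of $k$ columns; it is not clear why the $k$-user region $\bar R(k,\alpha^*,\sigma^2)$, rather than some single-user object, is the relevant constraint. The paper fills this gap with an explicit construction: randomly permute $\{1,\ldots,n\}$ and partition it into $k$ sets $S_1,\ldots,S_k$ of size $n/k$; user $i$'s message is ``which element of $S_i$ is in the support,'' and user $i$ transmits the corresponding column of $\bA$. The decoder is handed the partition and the gain assignment $(\beta_{\pi(i)})_{i=1}^k$ as side information, so it is strictly stronger than the original CS decoder; hence vanishing CS error implies vanishing communication error. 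Each user then conveys exactly $\log(n/k)$ bits, the total is $k\log(n/k)$ directly (your step~(i) inequality on $\log\binom{n}{k}$ is not needed), and the compound aspect arises because the permutation $\pi$ matching gains to users is random. With this construction in place, your steps~(iv)--(v) go through as you describe; the paper also treats the Fano/asymptotic slack informally, just as you anticipate.
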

The proof of this lemma proceeds in number of stages. In the next
few paragraphs, we will explain the essential ideas behind it.  The
arguments that we present shed light on some of the underlying
bottlenecks in the detection problem.

%
 To obtain a bound as above, we map the support recovery (SR) problem
to a communication problem and then establish the connection between
the number of measurements $m$ and the required number of channel
uses in the communication model, or alternatively to the maximal
rate at which error-free transmissions are possible.

In principle, the communication setup that we describe can simulate
any strategy for the support recovery problem. 
We briefly describe how any SR problem comes under our communication 
setup, see Figure below.


\begin{center}
\tikzstyle{encoder} = [right,rectangle, draw, minimum width=0.8cm, minimum height=0.6cm]
\begin{tikzpicture}
\node(SI) at (-1.25,0) [left,text width=1.4cm, rectangle, draw] {\small \textsc{Encoder Side Info}};
\node (e1) at (1.5,2.0) [encoder] {$\mathcal{E}_1$};
\node (e2) at (1.5,0.5) [encoder] {$\mathcal{E}_2$};
\node  at (2,-0.25) {$\cdot$};
\node  at (2,-0.5) {$\cdot$};
\node (ek) at (1.5,-1.5) [encoder] {$\mathcal{E}_k$};
\draw (SI.east) ++(0,0.2) --++(0.5,0) --++(0,1.6)[->] --++(2.25,0);
\draw (SI.east) ++(0,0.1) --++(0.6,0) --++(0,0.2)[->] --++(2.15,0);
\draw (SI.east) ++(0,-0.2) --++(0.5,0) --++(0,-1.5)[->] --++(2.25,0);
\draw (0.6,2.2) node [left] {$\rho(\alpha_1)$} [->] --++(0.9,0) ;
\draw (0.5,1.6) node [left] {$U_1$} ++(0,-1.5) node[left]{$U_2$} ++(0,-2.0)
				node[left]{$U_k$};
\draw (0.6,0.7) node [left] {$\rho(\alpha_2)$} [->] --++(0.9,0) ;
\draw (0.6,-1.3) node [left] {$\rho(\alpha_k)$} [->] --++(0.9,0) ;
\node (sum) at (5,0) [circle,draw] {$\sum$} 
	edge[<-] node[above]{$\alpha_1$} (e1.east)
	edge[<-] node[above]{$\alpha_2$} (e2.east)
	edge[<-] node[above]{$\alpha_k$} (ek.east) ;
\node (dec) at (4.25,-2.0) [right, rectangle, draw, minimum height=1.0cm] {\small \textsc{Decoder}} edge[<-] (sum);
\draw  (3.75,-1.6) node [left]{$\bar{\alpha}$}[->] --++(0.5,0);
\draw[thick, dotted] (0.7,2.0) ellipse(0.1cm and 0.5cm);
\draw[->,gray!70] (0.7,2.0) ++(0,-0.5) --++(0,-3.6)  --++(3.5,0) node[xshift=-1.3cm,yshift=0.18cm,black]{$Q_1$};
\draw[thick, dotted] (1.0,0.5) ellipse(0.1cm and 0.5cm);
\draw[->,gray!70] (1.0,0.50) ++(0,-0.5) --++(0,-2.25) --++(3.2,0);
\draw[thick, dotted] (1.25,-1.5) ellipse(0.1cm and 0.5cm);
\draw[->,gray!70] (1.25,-1.50) ++(0,-0.5) --++(0,-0.5) --++(2.95,0) node[xshift=-1.3cm,yshift=-0.18cm,black]{$Q_k$} ;
\draw[<-] (sum) --++(0,0.75) node[above]{$z$}; 
\end{tikzpicture}
\end{center}


Recall the notations introduced in paragraph~3 of this section. Consider $k$ encoders
trying to communicate information to a decoder. Each encoder corresponds to a non-zero
value of the input vector $\bx$ in the support recovery problem. Perform a random
permutation of the set $I_o$ and partition it into $k$ subsets $\{U_1,\ldots, U_k\}$,
provide this to each encoder as side-information. The decoder is given the index set $Q_i$ 
of each encoder's inputs, as well as the channel coefficient $\alpha_i$ from that encoder.
Clearly this system can emulate the SR problem. We now take an alternate view to the keep
the discussion as simple as possible. We describe a setup where the sparse vector $\bx$
for the SR problem, and the messages for the correponding communication problem are 
generated together. There is no loss of generality in coupling the two systems like this. 

To this end, randomly permute the indices of $\bx$ and partition them into $k$ sets $S_1,
S_2,\ldots, S_k$. To partially emulate the SR problem, the support of $\bx$ is chosen by
selecting one element from each of these sets, which correspond to the indices of the
support of $\bx$. This selection will correspond to message selection in a $k$-user
communication channel, in which $S_k$ is the message set of user $k$. A simple method of
communication is for user $k$ to encode the chosen message by sending the corresponding
column of $\bA$ directly (rather like a CDMA scheme, with no additional coding) and the
decoder then receives 
$$y = \sum_{i=1}^k \beta_{\pi(i)} a_i + z$$
where $a_i$ is the column corresponding to the message chosen by user $i$, and $\pi$ is a
random permutation of $\{1, 2, \ldots, k\}$ that assigns a component of $\bx$ to user $k$. 
The decoder is given the vector $(\beta_{\pi(i)})_{i=1}^k$ as side information. This
coherent $k$-user faded AWGN communication channel is a partial emulation of the CS
decoding problem in \eqref{eq:model_one}, except that here the decoder has more 
information: it knows that each $S_i$ contains exactly one index from the support of the
vector $\bx$, and it knows the corresponding value of $\bx$ in that component, namely
$\beta_{\pi(i)}$. Note that user $i$ is conveying $\log(|S_i|)$ bits to the decoder, and
the total number of bits being conveyed is $\sum_{i=1}^k \log(|S_i|)$ bits. The decoder in
this communication set-up must do at least as well as the CS decoder in the original
problem, so these bits are being conveyed reliably.

The above simple CDMA communication scheme is valid for the $k$-user, faded AWGN channel
in which, in general, the user is allowed to encode his messages using symbols each taken
from the same alphabet as the symbols in $\bA$, and each codeword satisfies the power
constraint \eqref{eq:alph:constraint}. Since the permutation $\pi$ is selected randomly,
this is a compound MAC, and the rate region can in principle be calculated. In a compound
MAC, the transmitter knows only a set of possible MACs from which one realization will be
picked \cite{CsiszarKorner81}. We do not go into the details of the coding theorems,
rather we merely use the results on the achievable maximal sum-rate. Compound MAC capacity
region is contained in the intersection of MAC capacity regions of the individual
components; in our case, the sum-rate is at best that in \eqref{eq:CMAC_sumrate}. The
lemma is proved by noting that the communication scheme requires successful communication
of $k \log(n/k)/m$ bits per channel-use, when we choose each set $S_i$ to have $n/k$
indices. This rate must be upper bounded by the sum-rate of the compound MAC.

\begin{cor}
By using a Gaussian measurement ensemble,
\begin{align}
m \geq \max \left\{ \frac{2 \log \frac nk}{\log(1 +
\alpha_k^2/\sigma^2)},
    \frac{2 k \log \frac nk}{\log(1+ \|\bar{\alpha}\|_{\ell_2}^2/\sigma^2)}
 \right\}.
\end{align}
and when $\alpha_k/\sigma << 1.0$,
\begin{align}
m \geq \frac {\sigma^2 \log \frac nk}{\alpha_k^2}
\label{eq:bound:ww}
\end{align}
\end{cor}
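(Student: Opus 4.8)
\textit{Proof proposal.} The plan is to specialize Lemma~\ref{lemma:converse1} to the Gaussian ensemble, for which the channel used in the emulation is the standard $k$-user fading Gaussian multiple-access channel, and then to complement the sum-rate bottleneck encoded in $R_{CMAC}$ with the per-user bottleneck coming from the weakest link. Both quantities turn out to be explicitly computable, so the bulk of the work is identifying the right capacity region and reading off the two active constraints.

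First I would compute the capacity region. When the entries of $\bA$ are i.i.d.\ Gaussian, the channel $y=\sum_{i=1}^k \beta_{\pi(i)}a_i+z$ from the reduction is, for a fixed assignment of the gains, a $k$-user Gaussian MAC with coefficients $\alpha^*$ and the per-symbol power constraint \eqref{eq:alph:constraint}; hence $\bar R(k,\alpha^*,\sigma^2)$ is the usual polymatroid $\{R\in\Real^k_+:\sum_{i\in S}R_i\le\tfrac12\log(1+\sum_{i\in S}(\alpha^*_i)^2/\sigma^2)\ \forall S\subseteq\{1,\dots,k\}\}$. The maximum of $\|R\|_{\ell_1}$ over this set is attained on the full-set face and equals $\tfrac12\log(1+\|\alpha^*\|_{\ell_2}^2/\sigma^2)$, which is invariant under any permutation $\alpha^*$ of $\bar\alpha$; the outer $\min_{\alpha^*}$ in \eqref{eq:CMAC_sumrate} is therefore vacuous and $R_{CMAC}(k,\bar\alpha,\sigma^2)=\tfrac12\log(1+\|\bar\alpha\|_{\ell_2}^2/\sigma^2)$. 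Substituting into Lemma~\ref{lemma:converse1} with the choice $|S_i|=n/k$ gives the second term, $m\ge 2k\log(n/k)/\log(1+\|\bar\alpha\|_{\ell_2}^2/\sigma^2)$.

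Next I would extract the single-user bound directly from the emulation. In the coupled system each user $i$ conveys $\log|S_i|=\log(n/k)$ bits over the $m$ available channel uses, and even after a genie reveals every other user's message to the decoder, user $i$ still faces a point-to-point AWGN link of signal-to-noise ratio $\alpha_i^2/\sigma^2$ and capacity $\tfrac12\log(1+\alpha_i^2/\sigma^2)$ per use; reliable decoding forces $\log(n/k)/m\le\tfrac12\log(1+\alpha_i^2/\sigma^2)$. Since $\bar\alpha$ is in descending order the binding index is $i=k$, yielding $m\ge 2\log(n/k)/\log(1+\alpha_k^2/\sigma^2)$; taking the larger of the two necessary conditions gives the stated maximum. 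For the low-SNR form I would just apply $\log(1+x)\le x$ to the first term, so that $2\log(n/k)/\log(1+\alpha_k^2/\sigma^2)\ge 2\sigma^2\log(n/k)/\alpha_k^2\ge\sigma^2\log(n/k)/\alpha_k^2$, which is \eqref{eq:bound:ww}.

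The main obstacle I expect is making the genie-aided, "treat the other users as known" step rigorous as a converse: one must argue that a support-recovery decoder which is at least as capable as the MAC decoder still induces a reliable point-to-point code for each individual user, so that a Fano-type argument applies per user and not only to the sum; equivalently, one must check that the relevant constraints of the compound (intersection-over-permutations) region are exactly the full-set face and the weakest singleton, and that permutation invariance genuinely collapses the outer minimization in $R_{CMAC}$. Once that bookkeeping is in place, the remainder is the substitution above and the elementary bound $\log(1+x)\le x$.
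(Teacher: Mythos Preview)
Your proposal is correct and follows essentially the same route as the paper. For the second term both you and the paper plug the Gaussian MAC sum-rate $\tfrac12\log(1+\|\bar\alpha\|_{\ell_2}^2/\sigma^2)$ into Lemma~\ref{lemma:converse1}; for the first term the paper stays inside the lemma by upper-bounding $R_{CMAC}$ with the sum of the compound single-user caps, $\tfrac{k}{2}\log(1+\alpha_k^2/\sigma^2)$, whereas you reach the identical inequality via a genie-aided per-user converse --- two phrasings of the same singleton constraint in the intersection region. The low-SNR line via $\log(1+x)\le x$ is the same in both.
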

The corollary follows from Lemma~\ref{lemma:converse1}  by noting
that the maximal sum-rate in the compound MAC  setting is less than
$k \log(1+\frac{\alpha_k^2}{\sigma^2})$, since this is the sum of
the single user constraints. The expression in \eqref{eq:bound:ww}
is identical to that obtained in \cite{Wainwright07}, which  can be
further
 tightened by an alternative approach. Consider the above compound
 MAC, when we take $S_1$ to have size $n-k+1$ and $|S_i|=1 , \forall i>1$.
 In this case, user $1$ is
 conveying $\log(n-k+1)$ bits to the decoder, and the other users
 are conveying zero bits, since the decoder knows apriori that these
 users have only one index (corresponding to telling the CS decoder
 $k-1$ elements of the support set as side information). The single
 user rate constraint then tells us that
\begin{align}
m \geq \frac {\log (n-k+1)}{\log(1+\alpha_k^2/\sigma^2)}.
\end{align}
\begin{cor}
If the measurement matrix is chosen by Bernoulli($\frac 12$) on
$\{+1,-1\}$,
\begin{align}
m \geq \frac{2 k \log_2 \frac nk}{\log_2 \pi e k/2} \label{eq:changwolf}
\end{align}
\end{cor}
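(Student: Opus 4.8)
The plan is to apply Lemma~\ref{lemma:converse1} and to bound the compound‑MAC sum‑rate $R_{CMAC}(k,\bar\alpha,\sigma^2)$ for the Bernoulli ensemble. First, observe that the column normalization $\frac1m\|a_i\|_{\ell_2}^2 = 1$ together with $a_{ij}\in\mathcal A=\{+1,-1\}$ forces every entry of $\bA$ to equal $\pm1$; consequently the $k$‑user channel produced by the reduction of Section~\ref{sec:num:converse} is the additive channel $Y=\sum_{i=1}^k \alpha_i^* X_i + Z$ with per‑symbol inputs $X_i\in\{+1,-1\}$ and $Z\sim\mathcal N(0,\sigma^2)$. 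By the usual single‑letterized MAC sum‑rate converse, the achievable sum‑rate, and hence $R_{CMAC}$, is at most $\max I(X_1,\dots,X_k;Y)$ over product distributions on $\{\pm1\}^k$.

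Second, I would discard the noise. Writing $W=\sum_i\alpha_i^* X_i$, the chain $(X_1,\dots,X_k)\to W\to Y$ is Markov since $Y=W+Z$ with $Z$ independent of the inputs, so $I(X_1,\dots,X_k;Y)=I(W;W+Z)\le H(W)$ because $W$ is discrete. When the nonzero values of $\bx$ share a common magnitude $\alpha$ (the regime relevant to \eqref{eq:changwolf}), $W/\alpha=\sum_i X_i$ is an affine image of a $\mathrm{Binomial}(k,\tfrac12)$ variable, and among product input laws the entropy of such a $\pm1$‑sum is maximized at the uniform law; hence $R_{CMAC}\le H(\mathrm{Binomial}(k,\tfrac12))$.

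Third, I would estimate the binomial entropy. Using Stirling's bound on the central binomial coefficient $\binom{k}{k/2}2^{-k}$ together with a Gaussian comparison for the remaining mass (equivalently, the local central limit theorem), $H(\mathrm{Binomial}(k,\tfrac12))\le \tfrac12\log_2\!\big(2\pi e\cdot\tfrac k4\big)+o(1)=\tfrac12\log_2\!\big(\tfrac{\pi e k}{2}\big)+o(1)$. Substituting $R_{CMAC}\le\tfrac12\log_2(\pi ek/2)$ into Lemma~\ref{lemma:converse1}, with each block $S_i$ chosen of size $n/k$ so that the total number of conveyed bits is $k\log_2(n/k)$, gives
\[
m \;\ge\; \frac{k\log_2(n/k)}{R_{CMAC}} \;\ge\; \frac{2k\log_2\frac nk}{\log_2(\pi e k/2)},
\]
which is \eqref{eq:changwolf}.

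The main obstacle is the entropy estimate for $\mathrm{Binomial}(k,\tfrac12)$: extracting the clean coefficient $\tfrac12\log_2(\pi ek/2)$ requires pairing the Stirling bound on $\binom{k}{k/2}2^{-k}$ with a tail/Gaussian‑comparison step, and one must decide whether to retain the harmless $o(1)$ (adequate since the statement is asymptotic in $k$) or push for an exact inequality. A secondary subtlety is the reduction to equal channel gains: if the nonzero amplitudes differ, $H(W)$ can be as large as $k$ and the sharp bound \eqref{eq:changwolf} does not follow from this argument, so one either restricts to the equal‑magnitude instance (as implicit in the statement) or checks that the $\min$ over permutations in \eqref{eq:CMAC_sumrate} provides no help and only a weaker bound results.
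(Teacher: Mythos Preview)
Your approach is essentially the paper's: both apply Lemma~\ref{lemma:converse1}, discard the Gaussian noise to upper-bound the compound-MAC sum-rate by that of the $k$-user binary-input adder channel, and then use $\tfrac12\log_2(\pi e k/2)$ for the latter. The only cosmetic difference is that the paper outsources the adder-channel sum-rate to the Chang--Wolf result \cite{ChangWolf81}, whereas you derive it directly as the entropy of a $\mathrm{Binomial}(k,\tfrac12)$ via Stirling; your observation that the sharp constant requires equal-magnitude nonzero entries is a genuine caveat that the paper glosses over.
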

With  $\{+1,-1\}$ as the input alphabet, we can see that this
channel has sum-rate strictly less than that available in a $k$ user
binary-input adder channel \cite{ChangWolf81}.  The achievable
 sum-rate there is half that of the denominator in \eqref{eq:changwolf}.
 This bound can be made tighter by considering an adder channel with noise, but we do not
 pursue it here.

Bounding the number of measurement as above also allows us to  get insights 
about the speed at which exponential decay of recovery-error happens, this is
given in the following lemma.
\begin{lem}
The error probability in support recovery obeys,
\begin{align}
P_{error} \geq \exp(-E_0(\alpha_k,\sigma^2) m) ,
\end{align}
where $E_0(\alpha,\sigma^2)$ is the \textit{cut-off rate} of a standard scalar
AWGN channel with power constraint $\alpha^2/\sigma^2$.
\end{lem}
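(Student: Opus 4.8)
The plan is to specialize the communication correspondence behind Lemma~\ref{lemma:converse1} to a single-user sub-channel and then apply a converse to its error exponent. Since any decoder for the support-recovery problem induces a decoder for the sub-problem with no larger error probability, it suffices to lower bound the optimal error probability of the sub-problem, and the exponent that emerges is governed by the weakest component, $\alpha_k$.

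\textbf{Genie reduction.} Let a genie reveal the locations $\rho(\alpha_1),\dots,\rho(\alpha_{k-1})$ and the values $\alpha_1,\dots,\alpha_{k-1}$ of all non-zero components of $\bx$ except the smallest, $\alpha_k$. Subtracting their contribution from $\mathbf y$ leaves the observation $\mathbf y' = \alpha_k\, a_{j^\ast} + \mathbf z$, where the true location $j^\ast$ is uniform over the $M:=n-k+1$ indices in $I_o\cup\{\rho(\alpha_k)\}$ and $a_{j^\ast}$ is the corresponding column of $\bA$. Since a correct full-support estimate in particular locates $j^\ast$, the $j^\ast$-error of the support decoder is at most $P_{error}$; and it is at least $P_M^\ast$, the minimal $j^\ast$-error over all detectors, attained by the nearest-column rule. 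Hence $P_{error}\ge P_M^\ast$.

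\textbf{Ensemble pairwise bound and single-letterization.} Fix a fixed-point-free permutation $\tau$ of the $M$ candidate indices. For the nearest-column rule, the event that $a_{\tau(j)}$ fits $\mathbf y'$ at least as well as $a_j$ already forces an error when $j^\ast=j$; hence, conditioned on $\bA$,
\[
P_M^\ast \ \ge\ \frac1M\sum_{j} Q\!\left(\frac{\alpha_k\,\|a_j-a_{\tau(j)}\|}{2\sigma}\right),\qquad Q(t):=\Pr\big(\mathcal N(0,1)>t\big).
\]
Taking expectations over $\bA$ and using that $a_j,a_{\tau(j)}$ are two independent columns with common law makes every summand equal, so $\E_{\bA}[P_M^\ast]\ge \E\big[Q(\alpha_k\|a_1-a_2\|/(2\sigma))\big]$. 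Since $Q(t)\gtrsim t^{-1}e^{-t^2/2}$ and $\|a_1-a_2\|^2=\sum_{i=1}^m (a_{1i}-a_{2i})^2$ has i.i.d.\ summands, the expectation factorizes, up to a $\mathrm{poly}(m)$ factor, into
\[
\Big(\E\big[e^{-\alpha_k^2(a_{11}-a_{21})^2/(8\sigma^2)}\big]\Big)^{m}\ =\ e^{-mE_0(\alpha_k,\sigma^2)},\qquad E_0(\alpha_k,\sigma^2):=-\log\E\big[e^{-\alpha_k^2(a_{11}-a_{21})^2/(8\sigma^2)}\big].
\]
The term inside the logarithm is exactly the ensemble-averaged Bhattacharyya coefficient of a scalar AWGN channel with input $\alpha_k a_{1i}$ and noise variance $\sigma^2$, i.e.\ Gallager's $E_0(\rho)$ at $\rho=1$ for this input, which for the cut-off-rate-optimal input law (Gaussian, as for the Gaussian ensemble, and up to a vanishing loss otherwise) is the cut-off rate of a standard scalar AWGN channel at SNR $\alpha_k^2/\sigma^2$. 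Absorbing the $\mathrm{poly}(m)$ prefactor into the exponent gives $P_{error}\ge e^{-E_0(\alpha_k,\sigma^2)m}$.

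\textbf{Main obstacle.} The delicate step is the last identification: turning the ensemble pairwise exponent into the \emph{cut-off rate}, rather than merely an exponent of this shape, requires the per-entry law of $\bA$ to be the cut-off-achieving input, which holds exactly for the Gaussian ensemble and otherwise needs a short argument replacing the empirical input law by the Gaussian one while controlling the loss. The remaining work --- the $Q$-function tail bound, the cancellation of the $1/M$ against the sum, and checking that all polynomial-in-$m$ factors are inconsequential on the exponential scale --- is routine book-keeping.
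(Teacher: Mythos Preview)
Your genie reduction to the scalar sub-channel carrying the weakest coefficient $\alpha_k$ is exactly the paper's first step: in the compound-MAC correspondence, isolate the single-user channel with gain $\alpha_k$ and observe that its error probability lower bounds $P_{error}$. So the two arguments coincide up to that point.

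They diverge in the second step. The paper does not compute anything: it simply invokes the classical error-exponent converse for the scalar AWGN channel, namely that the reliability function is maximized at zero rate and equals $E_0(\alpha_k,\sigma^2)$, so \emph{no} codebook (hence no choice of the columns of $\bA$) can drive the error below $e^{-E_0 m}$. This is a one-line appeal to a known result and it holds for every fixed $\bA$.

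Your route --- pairwise lower bound via a fixed-point-free permutation, then expectation over the random ensemble and single-letterization --- is a legitimate way to \emph{rederive} a zero-rate exponent, but it delivers something weaker than the paper's statement in two respects. First, after you average over $\bA$ you have bounded $\E_{\bA}[P_{error}]$, not $P_{error}$ for each $\bA$; the lemma as stated is a universal converse (the quantity $E_0(\alpha_k,\sigma^2)$ depends only on the channel, not on the measurement matrix). Second, as you yourself flag, the exponent you obtain is the ensemble Bhattacharyya parameter $-\log \E[e^{-\alpha_k^2(a_{11}-a_{21})^2/(8\sigma^2)}]$, which equals the AWGN cut-off rate only when the per-entry law is the cut-off-achieving input; for other ensembles you would need an extra optimization/comparison step, whereas the paper's black-box invocation of the scalar-channel reliability function sidesteps this entirely.

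In short: same reduction, different finishing move. The paper quotes the reliability-function converse and gets a statement valid for all $\bA$; your explicit pairwise computation is instructive but, as written, yields only an ensemble-averaged bound whose exponent matches $E_0$ only for the Gaussian ensemble.
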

Notice that in the compound MAC we consider,  the error probability in the scalar
channel with gain $\alpha_k$  lowerbounds the total error probability. The best exponent
of error-decay for this channel is given by the above $E_0(\cdot)$, which is also the
maximal error exponent, happening at zero rate. We can extend this result to include the
sphere-packing and straightline bounds,  this is part of some ongoing work.


\section{Related Work}

A direct comparison can be made between our work and that of \cite{Akcakaya07}. In that
paper, it was shown that Gaussian measurement matrices are asymptotically optimal for
joint typical decoders with O($k$) measurements, with fixed SNR, for each error metric
defined here. We extend this result to show that these sufficient conditions also hold
for centered subgaussian measurement matrices in the linear sparsity regime. Necessary
conditions are also established in \cite{Akcakaya07} using arguments based on MACs,
however, their bounds are not as refined as ours.

In \cite{Wainwright07}, necessary and sufficient conditions are given for error metric 1.
Sufficient conditions were established using an ML decoder while the necessary conditions
exploited a corollary of Fano's inequality. By comparing results in \cite{Wainwright07}
and \cite{Wainwright06}, it was shown that, in the sublinear sparsity regime, Lasso is
essentially information theoretically optimal. However, in the linear regime, there has
been no practical algorithm that has achieved the $\Omega(k \log (n-k))$ bound established
in our paper and Fletcher et al.\cite[Theorem 1]{FletcherGoyal08}.

Results from Fletcher et al.\cite{FletcherGoyal08} is the closest to ours in terms of the
scaling bounds they achieved. After submitting a first version here, we noticed that 
\cite{FletcherGoyal08} describes some good bounds for the Gaussian case, along with a
detailed comparison with existing bounds. Our converse bound generalizes their result, and
we believe it is comparable for specific instances. A detailed study along this direction
will be included in the final manuscript.

Partial support recovery was also addressed in \cite{Reeves08} and necessary conditions
are given. There a general bound was derived for deterministic and stochastic signals. A
bound strictly focussed on Fourier measurement matrices is found in
\cite{GastparBresler00}, which uses Fano's inequality to establish the bound. In terms of
the necessary condition in \cite[Theorem 3.2]{Reeves08}, Theorem \ref{lemma:converse1} is
tighter and is also general as it applies to a variety of measurement ensembles. Theorem 
\ref{lemma:converse1} is general enough to apply to structured codewords, such as Fourier
measurement matrices, although the codewords now have a dependence. However, one needs to
compute the capacity region of the a compound MAC channel using these structure codewords.

\section{Conclusion}

We have analyzed schemes for sparse signal recovery using subgaussian measurement 
matrices. Our achievability scheme used an impractical decoder. Future work intends
to tackle the performance of subgaussian matrices and practical decoders.

\section*{Appendix}

We sketch the proof of the concentration result based on modification of arguments by
Birg\'e and Massart in \cite{Birge98}. Let $\epsilon = \gamma_2 \lambda + \sqrt{2 
\gamma_1 \lambda}$. We first prove (\ref{eq:ge}) bounding $V$ using Chernoff's bound,
\ben
\Pr(V \ge \epsilon) \le \exp\left( \inf_{t > 0} \left( -t\epsilon + \log \E\lbrack
e^{tV}\rbrack\right) \right)
\een
Since $V$ satisfies the moment condition,
\ben
\log \E\lbrack e^{tV} \rbrack \le -\gamma_1 t - \frac{\gamma_1}{2}\log (1-\gamma_2 t)  
\le \frac{\gamma_1 t^2}{2(1-\gamma_2 t)} 
\een
we have
\ben
\Pr(V \ge \epsilon) \le \exp \left( -g(\epsilon) \right)
\een
where
\ben
g(\epsilon) = \sup_{t > 0} \left( t\epsilon - \frac{\gamma_1 t^2}{2(1- \gamma_2 t)}
\right) .
\een
It can be shown that the supremum is achieved for $t = \gamma_2^{-1} \lbrack 1- 
\sqrt{\gamma_1} (2\epsilon \gamma_2 + \gamma_1)^{-1/2}\rbrack$ and that
\ben
g(\epsilon) \ge \frac{\epsilon^2}{2\gamma_2 \epsilon + 2 \gamma_1}.
\een
and that $g(\epsilon) = \lambda$. To prove (\ref{eq:le}), we note that $\log \E\lbrack 
e^{tV} \rbrack \le \gamma_1 t^2$ for $-1/\gamma_2 < t < 0$. The result then follows.

\small
\bibliographystyle{abbrv}
\bibliography{isit09}
\nocite{Cover06, CsiszarKorner81}
\end{document}